\newtheorem{assumption}{Assumption}
\newtheorem{problem}{Problem}
\newtheorem{remark}{Remark}
\newtheorem{lemma}{Lemma}
\newtheorem{definition}{Definition}
\newtheorem{corollary}{Corollary}
\newtheorem{theorem}{Theorem}
\newtheorem{proposition}{Proposition}
\newcommand{\acks}[1]{\section*{Acknowledgments}#1}
\newenvironment{keywords}
{\bgroup\leftskip 20pt\rightskip 20pt \small\noindent{\bfseries
Keywords:} \ignorespaces}%
{\par\egroup\vskip 0.25ex}
\newenvironment{proof}{\paragraph{Proof:}}{\hfill$\square$}
\def\1{\mathbf{1}}
\newcommand{\MM}{\mathcal{M}}
\newcommand{\G}{G}
\newcommand{\E}{E}
\newcommand{\V}{V}
\newcommand{\LL}{\mathcal{L}}
\newcommand{\aij}{\alpha_{ij}}
\newcommand{\bij}{\beta_{ij}}
\newcommand{\GL}{\G_{\LL}}
\newcommand{\Ni}{\mathcal{N}_i}
\newcommand{\Nin}{\Ni^{\rm in}}
\newcommand{\Nout}{\Ni^{\rm out}}
\newcommand{\No}{\mathcal{N}^{\rm out}}
\newcommand{\mC}{\mathcal{C}}
\newcommand{\mD}{\mathcal{D}}
\newcommand{\mG}{\mathcal{G}}
\newcommand{\mL}{\mathcal{L}}
\newcommand{\mM}{\mathcal{M}}
\newcommand{\mV}{\mathcal{V}}
\newcommand{\GW}[1]{\G({#1})}
\newcommand{\VW}[1]{\V({#1})}
\newcommand{\EW}[1]{\E({#1})}
\newcommand{\ind}{\mathbf{1}_{\{q\in \LL\}}}
\newcommand{\errc}{\Delta_{\mC_q}}
\newcommand{\ol}{o_{\LL_q}}
\newcommand{\errq}{\Delta_{\mL_q}}
\newcommand{\infnorm}[1]{\norm{#1}_{\infty}}
\newcommand{\footremember}[2]{%
    \footnote{#2}
    \newcounter{#1}
    \setcounter{#1}{\value{footnote}}%
}
\newcommand{\footrecall}[1]{%
    \footnotemark[\value{#1}]%
}
\providecommand{\norm}[1]{\lVert#1\rVert}
\title{Learning Trust Over Directed Graphs in Multiagent Systems}
\author{
  Orhan Eren Akg\"un\footremember{harvardseas}{ School of Engineering and Applied Sciences (SEAS), Harvard University}\\
  \texttt{erenakgun@g.harvard.edu}
  \and
  Arif Kerem Dayı\footrecall{harvardseas}\\
  \texttt{keremdayi@college.harvard.edu}
  \and 
  Stephanie Gil\footrecall{harvardseas}\\
  \texttt{sgil@seas.harvard.edu}
  \and 
  Angelia Nedi\'c\footremember{asueece}{ECEE, Arizona State University}\\
  \texttt{Angelia.Nedich@asu.edu}
}
\date{}
\begin{document}

\maketitle
\begin{abstract}%
We address the problem of learning the legitimacy of other agents in a multiagent network when an unknown subset is comprised of malicious actors. We specifically derive results for the case of directed graphs and where stochastic side information, or observations of trust, is available. We refer to this as ``learning trust'' since agents must identify which neighbors in the network are reliable, and we derive a protocol to achieve this. We also provide analytical results showing that under this protocol i) agents can learn the legitimacy of all other agents almost surely, and that ii) the opinions of the agents converge in mean to the true legitimacy of all other agents in the network. Lastly, we provide numerical studies showing that our convergence results hold in practice for various network topologies and variations in the number of malicious agents in the network. %
\end{abstract}

\begin{keywords}%
  Multiagent systems, adversarial learning, directed graphs, networked systems%
\end{keywords}

\section{Introduction}

Learning the network topology in multiagent systems, what edges exist and are reliable, is critical because of the central role it plays in many multiagent collaboration tasks. This includes a wide range of tasks from estimation, to control, to machine learning, optimization and beyond \cite{rabbat2004distributed,olshevsky2010efficient,nedic2018network}.  Many times both the coordination protocols and achievable performance of the team is dictated by topology~\cite{olfati2004consensus, xi2018linear, cai2012average, nedic2014distributed}. Two aspects that can greatly complicate the learning however, are i) directed graphs, and ii) the presence of untrustworthy data. Directed graphs are more common in practice due to heterogeneity in sensing and communication capabilities in multiagent systems, but are often more difficult to analyze due to non-symmetric information flow. On the other hand, the presence of malicious agents are an important real-world consideration but lead to untrustworthy data in the system~\cite{lamport2019byzantine,sundaram2010distributed,sundaram2018distributed,fischer1985impossibility}. Unfortunately, the compounded impact of both of these challenges is a very complex problem with sparse theory to date. \textbf{\emph{Our objective in this paper is to develop a learning protocol and its related analysis, where agents learn over time the legitimacy of their neighbors in the presence of malicious agents over directed graphs.}}

The class of problems over directed graphs pose a particular challenge to achieving resilience: many distributed algorithms on directed graphs require agents to have some information about their out-neighbors, but because of the asymmetric information flow, they cannot sense or obtain information directly from these agents. This makes detection of malicious out-neighbors particularly difficult. For instance, the distributed optimization algorithms presented in \cite{nedic2014distributed, tsianos2012push, tsianos2012consensus, makhdoumi2015graph, push-pull2021} and the distributed consensus algorithms \cite{cai2012average,dominguez2012distributed} all require that the agents know the number of out-neighbors they have. This assumption can break if an agent designs the update rule considering an out-neighbor as legitimate, but that agent is malicious in reality. Hence, agents need to have some information about the trustworthiness of their out-neighbors. An interesting concept that has the potential to help this difficult problem is the use of``side information" or data in cyberphysical systems \cite{liu2019trust,xiong2013securearray, pasqualetti2015control,renganathan2017spoof,gil2017guaranteeing,LCSS,giraldo2018survey,mallmann2021crowd, cavorsi2022exploiting}. Recent work has shown that by leveraging physical channels of information in the system, agents can gain stochastic information about the trustworthiness of the other agents~\cite{liu2019trust,giraldo2018survey,xiong2013securearray,gil2017guaranteeing}. We call these ``stochastic observations of trust.'' It has been shown that exploiting these observations leads to stronger results in resilience for multiagent systems~\cite{ourTRO,LCSS,mallmann2021crowd}. Unfortunately however, existing results do not immediately extend to the case of directed graphs. 

In this work, we are interested in learning a trusted graph topology over a directed graph. Using stochastic information about trustworthy neighbors, agents can decide how they should process information that they receive from their in-neighbors, and with which out-neighbors they should share their information. Since agents cannot necessarily observe their out-neighbors, it is natural to think that they need to get information about their out-neighbors from the other agents. We investigate what sufficient information agents can share and how they should process this information to learn the trustworthiness of the other agents in the system in a robust way. This setup is particularly challenging since there might be malicious agents in the system sharing misinformation during this learning process. We present a learning protocol to enable each agent to learn the trustworthiness of all other agents in the system leveraging the opinion of their neighbors. Agents develop opinions in two ways: For their in-neighbors they can obtain a trust observation, they then use this information to form their own opinions. For the other agents, they use the opinions of their in-neighbors they trust to update their opinions. Under the assumption that the subgraph of legitimate agents is strongly connected and each malicious agent is observed by at least one legitimate agent, we show that all legitimate agents can almost surely learn the trustworthiness of all other agents.

Our contributions can be summarized as follows: i) We present a novel learning protocol that enables the legitimate agents in the system to learn the trustworthiness of the other agents where the underlying communication network is a directed graph; 
ii) We prove that using our learning protocol, legitimate agents can learn the identities of the other agents almost surely; 
iii) We show that opinions of the agents converge in mean to the true identity of the agents; iv) We provide extensive numerical studies to show that the convergence results hold in practice for various network topologies and the number of malicious agents. 

\section{Problem Formulation}

We consider a distributed multi-agent system where agents need to collaborate in order to achieve a common task such as solving an optimization problem. 
We represent the communication graph among agents with a directed graph $\G=(\V, \E)$ where the set $\V$ represents the set of agents communicating over $\G$ with a set $\E$ of directed links. Moreover, we let $N=|V|$ be the number of agents.
If there is an edge $(i,j)\in \E$, then agent $i$ can send information to $j$, and we say that $j$ is an out-neighbor of $i$ and $i$ is an in-neighbor of $j$. We assume each agent $i$ has a self-loop $(i,i)\in E$. Moreover, for an agent $i\in V$, we define its in-neighborhood $\Nin =\{j \in V\mid (j,i)\in E\}$ and out-neighborhood $\Nout=\{j \in V \mid (i,j)\in E\}$. 
We assume that agents in the system communicate at every time step $t$.
Moreover, we assume that there might be a set $\MM \subsetneq V$, called \textit{malicious agents}, of  non-cooperative agents in the system that are either adversarial or malfunctioning. We assume that malicious agents can act arbitrarily.
We call the set of cooperative agents, that is, the set of agents outside the set $\MM$, legitimate agents, denoted by $\LL$. We have $\LL \cap \MM = \emptyset$ and $\LL \cup \MM = \V$. We say that malicious agents are untrustworthy and legitimate agents are trustworthy. We assume that the set of malicious agents $\MM$ is unknown. We wish to learn the \textit{trustworthiness} of agents in the network. We are interested in the problems where every agent receives a stochastic observation of trust from an agent that sends information during each communication round. We note that stochastic observations of trust have been developed in previous works~\cite{gil2017guaranteeing,ourTRO} and we use a similar definition here:

\begin{definition}[Stochastic Observation of Trust $\aij$]
    We denote stochastic observations of trust with $\aij(t)$ if agent $j$ sends information to agent $i$ at time $t$, and we assume that $\aij(t) \in [0,1]$. Here, $\aij(t)$ represents the stochastic value of trust of agent $j$ as observed by agent $i$. 
\end{definition}

Agents can develop opinions about trustworthiness of their in-neighbors using these stochastic trust observations over time. 
However, it is not straightforward how they can develop opinions about their out-neighbors since they have no direct observations of their trustworthiness. Next, we formalize the notion of opinion and then we discuss how to construct opinions of agents.

\begin{definition}[Opinion of Trust]
    We denote agent $i$'s opinion of trust about agent $j$ at time $t$ with  $o_{ij}(t)\in[0,1]$. We say agent $i$ trusts agent $j$ if $o_{ij}(t)\geq1/2$ and does not trust agent $j$ otherwise.
\end{definition}

We want to find a learning protocol to enable the legitimate agents to develop accurate opinions $o_{ij}(t)$ about their neighbors in directed graphs, including their out-neighbors. An example case is shown in Figure \ref{fig:multi_robot_example}.
\begin{figure}[t]
    \centering

    \subfigure[]{\includegraphics[scale=0.34]{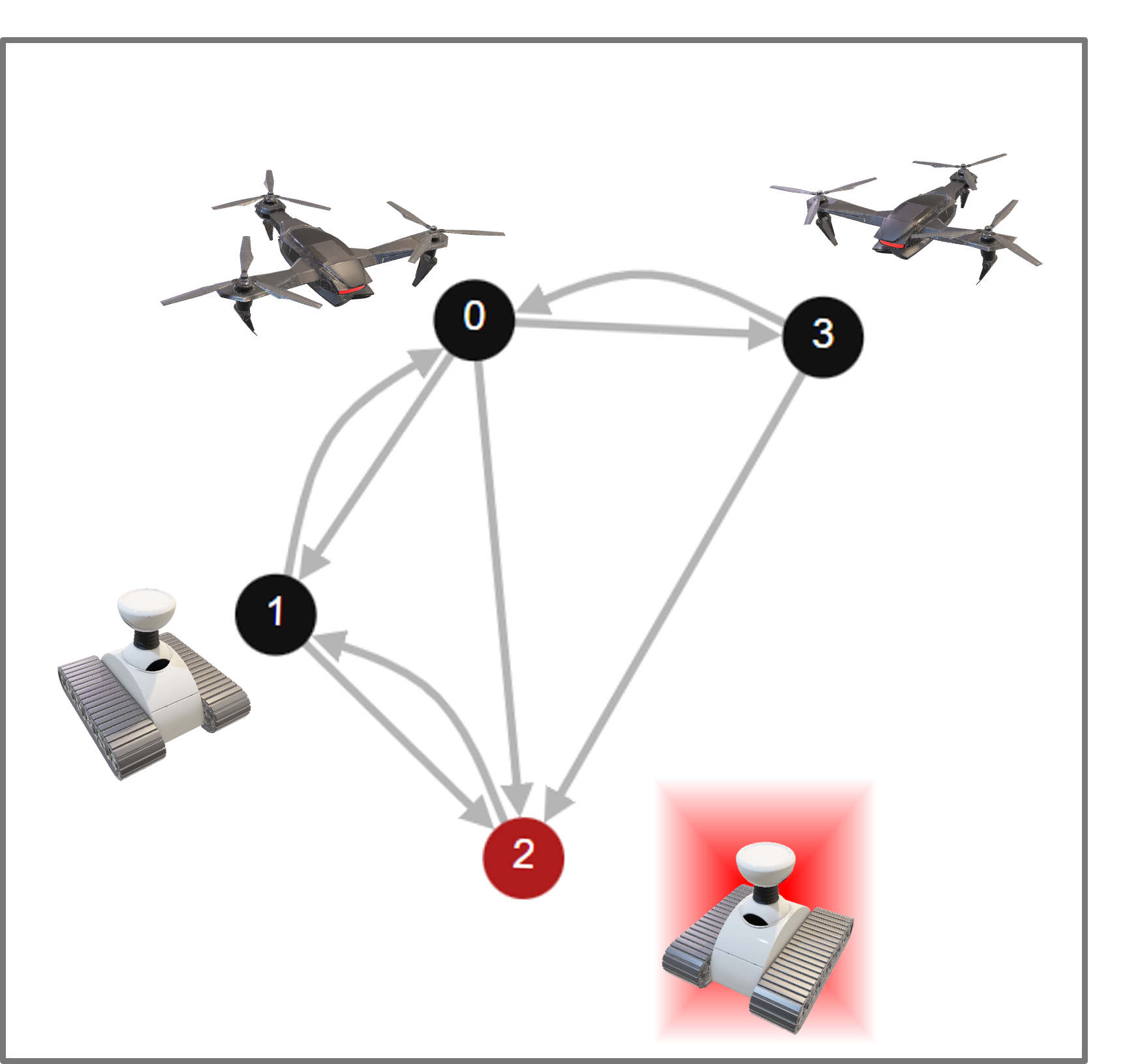}}
    \hspace{5mm}
    \subfigure[]{\includegraphics[scale=0.34]{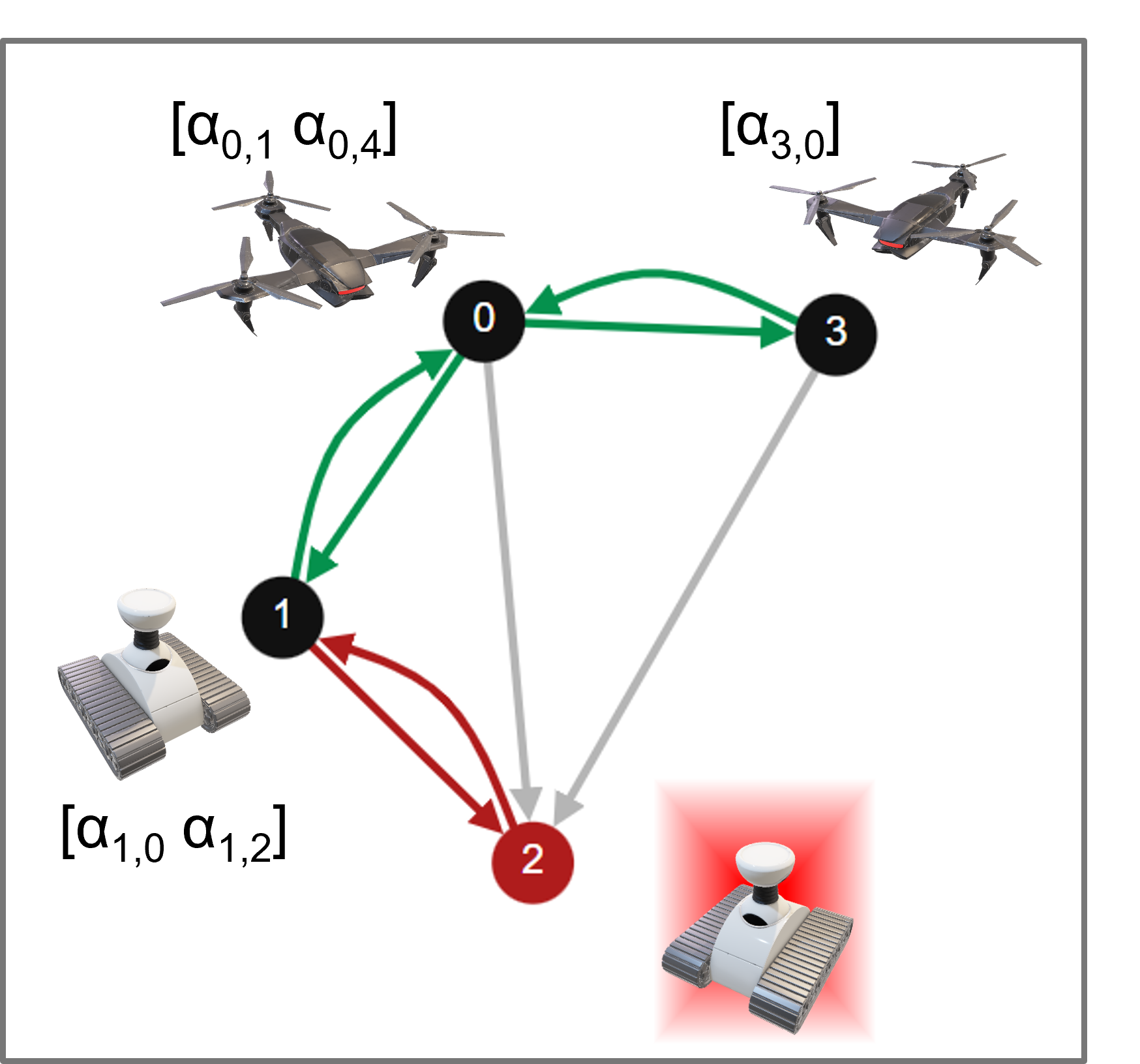}}
    \hspace{5mm}
    \subfigure[]{\includegraphics[scale=0.34]{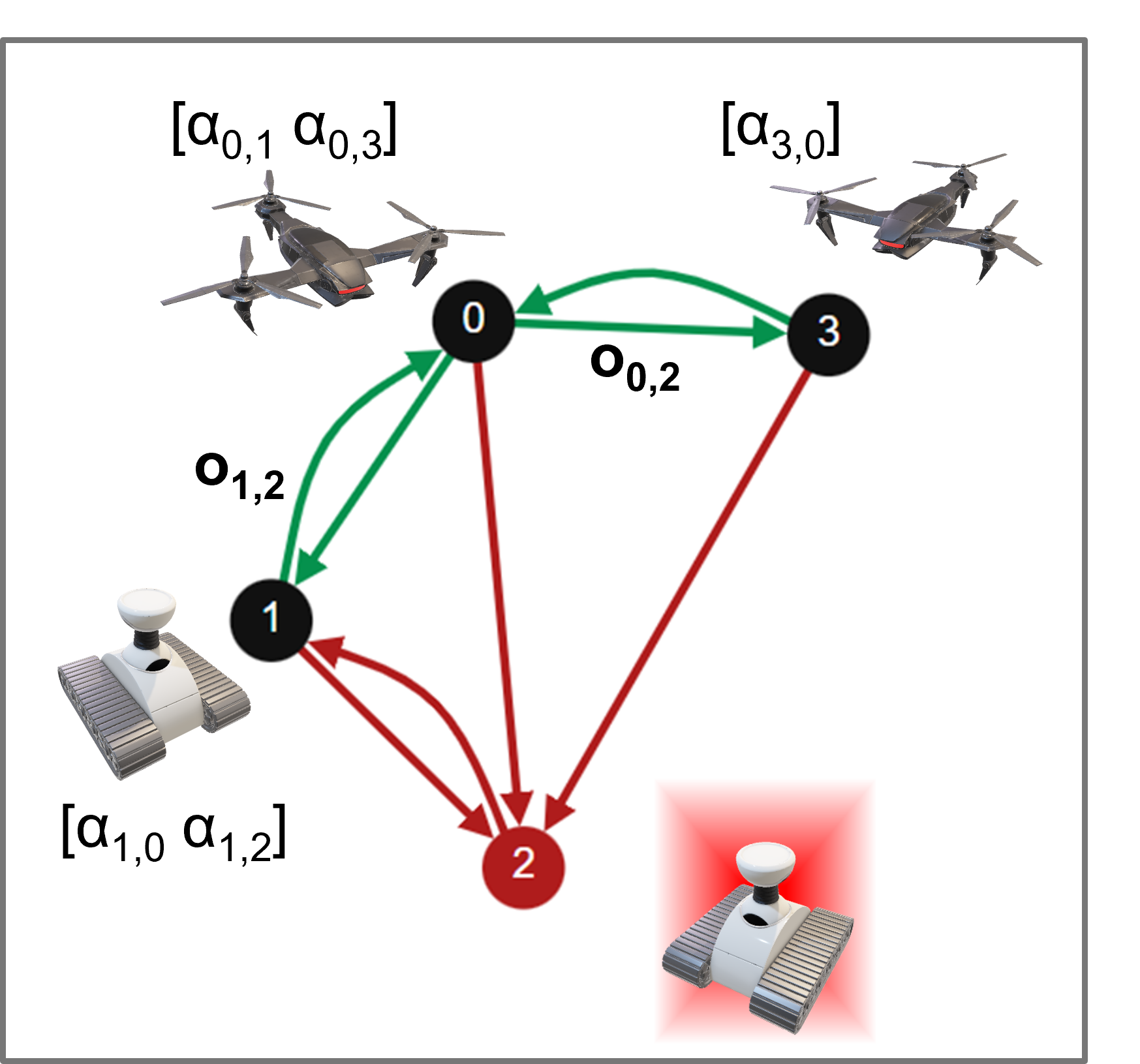}}
    
    \caption{This schematic shows our problem setup with one malicious agent shown as a red node. Various stages of learning are depicted: (a) initial state (b) agents use their direct observations to learn the trustworthiness of other agents (c) agents indirectly learn the trustworthiness of the entire network by propagating their opinions.}
    \label{fig:multi_robot_example}
    
\end{figure}
Next, we state our assumptions under which we develop our protocol.

\begin{assumption}[Connectivity of Network]\label{assumption:graph-connectivity}

\begin{enumerate}[leftmargin=*]

    \item Sufficiently connected graph: The subgraph $\GL$ induced
by the legitimate agents is strongly connected.

    \item Observation of malicious agents: For any malicious agent $j \in \MM$, there exists some legitimate agent $i \in \LL$ that observes $j$, i.e., $j \in \Nin$ for some $i \in \LL$.
\end{enumerate}
\end{assumption}

\begin{assumption}[Trust Observations]\label{assumption:trust-observations}
Suppose that the following hold:
\begin{enumerate}
    \item Homogeneity of trust variables: The expectation of the
variables $\aij(t)$ are constant for the case of malicious transmissions and legitimate transmissions, respectively, i.e., for some scalars $E_{\mM}$, $E_{\mL}$ with $E_{\mM}<0$ and $E_{\mL}>0$, $E_{\mM} = \mathbb{E}[\aij(t)]-1/2$ for all $i \in \LL, \ j \in \Nin \cap \MM$, and $E_{\mL} = \mathbb{E}[\aij(t)]-1/2$ for all $i \in \LL, \ j \in \Nin \cap \LL.$

    \item Independence of trust observations: Given the true trustworthiness of $j$, the observations $\aij(t)$ are independent for all $t$.
\end{enumerate}
\end{assumption}
Note that stochastic observations of trust satisfying Assumption~\ref{assumption:trust-observations}.1 were derived in~\cite{gil2017guaranteeing}. Additionally, we make the same Assumptions \ref{assumption:graph-connectivity}.1, \ref{assumption:trust-observations}.1, and \ref{assumption:trust-observations}.2 as in the work \cite{ourTRO}, except for the first assumption, where we require the graph to be strongly connected instead of connected since we deal with directed graphs. Assumption \ref{assumption:graph-connectivity}.2 is new and necessary since it is not possible to learn the legitimacy of an agent if no other agent is observing that agent. This requirement shows up in the analysis later on. We formalize the problem that we are aiming to solve in this paper as follows:

\begin{problem}
Let $i$ be a legitimate agent and let $q$ be an arbitrary agent in the system. Assume that stochastic observations of trust are available and Assumption \ref{assumption:graph-connectivity} and Assumption \ref{assumption:trust-observations} hold. We want to find a learning protocol such that for all legitimate agent $i\in \LL$ and for all agents $q\in \V$, $o_{iq}(t)$ converges to $1$ if $q\in \LL$ and $0$ if $q\in \MM$ almost surely.
\end{problem}

\section{Learning Protocol}

In this section we introduce our learning protocol. Let each agent $i$ store a vector of trust $o_i(t)$ at time t, where $o_i(t)$ is an $N\times1$ column vector.
Let $o_{ij}(t)$ denote the $j$th component of $o_i(t)$. The value $o_{ij}(t)$ represents agent $i$'s opinion about the node $j$ where a higher $o_{ij}(t)$ indicates that agent $i$ trusts agent $j$ more. Let $\bij(t)$ represent an aggregate trust value for the link $(j,i)$ at time $t$. Following \cite{ourTRO}, we define $\bij(t)$ as 

\begin{equation}
    \bij(t)=\sum_{k=0}^{t}(\aij(k)-1/2)
\end{equation}

for all $j\in \Nin$ and we define $\beta_{ii}(t)=1$ for all $t$.
Using the aggregated stochastic trust value $\bij(t)$, a legitimate agent $i$ decides on its trusted in-neighbor set by defining $
\Nin(t)=\{j\in \Nin \mid \bij(t)\geq 0\}
.$
In our learning protocol, an agent $i$ shares $o_i(t)$ with its out-neighbors. A legitimate agent $i$ determines its vector of $o_i(t)$ after receiving $o_j(t-1)$ from all of its in-neighbors $j\in \Nin$ using the following update rule:
\begin{equation}
\label{eq:trust_vector_update_rule}
    o_{iq}(t)= 
    \begin{cases}
      1        &\text{if} \ q\in \Nin\ and \ \beta_{iq}(t)\geq 0\ \\
      0        & \text{if}\ q \in \Nin \ and \ \beta_{iq}(t)<0.  \\
      \sum_{j \in \Nin(t)} \frac{o_{jq}(t-1)}{|\Nin(t)|} & \text{if}\ q \notin \Nin \ 
    \end{cases}
\end{equation}
Every legitimate agent $i$ initializes its opinion vector with vector $o_i(0)$ with all ones, meaning that in the beginning, they trust everyone in the network. However, this choice of initialization is arbitrary and as it does not affect our results. A legitimate agent $i$ decides on its trusted out-neighbor set by defining $\Nout(t)=\{j\in \Nout \mid o_{ij}(t)\geq 1/2\}$.

Notice that the trust vector $o_i(t)$ is in $[0,1]^{N}$ by definition. We assume that malicious agents can decide its trust vector $o_i(t)$ arbitrarily. With this protocol, legitimate agents use only the stochastic observations of trust $\alpha_{ij}$ to determine the legitimacy of their in-neighbors. For the other nodes, they use the opinions of their trusted in-neighbors to form their opinion. 

\section{Analysis}
Recall that agents either directly observe an agent and develop their own opinions using their observations, or they use the opinions of others to generate an opinion about an agent. In our analysis, we first show that all legitimate agents learn their in-neighbors such that their trusted in-neighbors are the same as their legitimate in-neighbors. Learning in-neighbors allow agents to propagate this information to others and also stop the inflow of information from any malicious agent. Then, we analyze the propagation of information after legitimate agents learned their in-neighbors. To do this, we write the update rule of trustworthiness about an agent in matrix form, and show that the effect of the error introduced by malicious agents is asymptotically eliminated. More precisely, we show that estimated trust values converge in mean and almost surely to true trust values (1 for legitimate, 0 for malicious agents).
\subsection{Notation}
Let $|S|$ denote the cardinality of set $S$. Let $[W]_{ij}$ denote entry  in row $i$ and column $j$ of matrix $W$. For some agent $j$ and set $S$, define the indicator function $\mathbf{1}_{\{j \in \mathcal{S}\}}$ as:
$$\mathbf{1}_{\{j \in \mathcal{S}\}} = \begin{cases}
 1 &\text{if $j \in \mathcal{S}$} \\
  0 &\text{otherwise}
\end{cases}.$$ 
We also use the same notation for indicator vectors when the size of the vector is clear from the context.
\subsection{Learning Trustworthiness}
Since agents use their trusted in neighbors in their updates, we start by showing that agents learn the legitimacy of their in-neighbors. This will be useful later to show that the protocol converges to the desired state.
\begin{lemma}
    \label{lemma:learning_in_neighbors}
    There exists a random finite time $T_f$ such that the following holds almost surely
    \begin{flalign}
        \bij(t) \geq 0 \ \text{for all} \ t\geq T_f \ \text{and} \ i\in \LL, j\in \Ni^{\rm in}\cap \LL \\ \nonumber
        \bij(t) < 0 \ \text{for all} \ t\geq T_f \ \text{and} \ i\in \LL, j\in \Ni^{\rm in}\cap \MM
    \end{flalign}
    
\end{lemma}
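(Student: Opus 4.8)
The plan is to recognize that, for a fixed legitimate agent $i\in\LL$ and a fixed in-neighbor $j\in\Nin$, the aggregate value $\bij(t)$ is the partial sum $\sum_{k=0}^{t} X_k$ of the independent, bounded random variables $X_k := \aij(k)-1/2 \in [-1/2,1/2]$, whose expectation equals $E_{\mL}>0$ when $j\in\LL$ and $E_{\mM}<0$ when $j\in\MM$, by Assumption~\ref{assumption:trust-observations}. First I would invoke a strong law of large numbers for independent (not necessarily identically distributed) summands; this applies because the variances are uniformly bounded by $1/4$, so Kolmogorov's criterion $\sum_k \mathrm{Var}(X_k)/(k+1)^2 < \infty$ is satisfied. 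Hence $\bij(t)/(t+1)$ converges almost surely to $E_{\mL}$ (resp.\ $E_{\mM}$), which forces $\bij(t)\to+\infty$ a.s.\ when $j$ is legitimate and $\bij(t)\to-\infty$ a.s.\ when $j$ is malicious. Therefore for each such pair $(i,j)$ there is an almost surely finite random time $T_{ij}$ after which $\bij(t)$ has the correct sign; the self-loop case $j=i$ is trivial since $\beta_{ii}(t)=1\ge 0$ and $i\in\LL$.

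Next I would set $T_f := \max_{i\in\LL}\max_{j\in\Nin} T_{ij}$. Since $\V$ is finite, this is a maximum of finitely many almost surely finite random variables, so $T_f$ is almost surely finite, and by construction both displayed inequalities hold simultaneously for every $t\ge T_f$, which is precisely the claim.

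A fully quantitative alternative, which I would also note, replaces the strong law by Hoeffding's inequality: for $j\in\LL$ one gets $\mathbb{P}(\bij(t)<0)\le \exp(-2(t+1)E_{\mL}^2)$, and for $j\in\MM$ one gets $\mathbb{P}(\bij(t)\ge 0)\le \exp(-2(t+1)E_{\mM}^2)$. These tails are summable in $t$, so the Borel--Cantelli lemma implies that for each pair $(i,j)$ only finitely many ``wrong sign'' events occur; a union bound over the finitely many pairs again yields a common almost surely finite $T_f$, with the added benefit of an explicit bound on its tail.

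I do not anticipate a substantive obstacle. The two points requiring care are: (i) Assumption~\ref{assumption:trust-observations} grants independence across time and a common \emph{mean}, but not identical distributions, so one must use a law-of-large-numbers or concentration statement that does not require the $\aij(t)$ to be identically distributed; and (ii) the almost sure finiteness of $T_f$ relies on $\V$ being finite, so that a single random time works uniformly over all relevant legitimate-agent/in-neighbor pairs.
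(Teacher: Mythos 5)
Your proof is correct: the paper itself disposes of this lemma by citing Proposition~1 of the prior work \cite{ourTRO}, and the argument you give (Hoeffding bounds on the partial sums $\bij(t)$, Borel--Cantelli to get an almost surely finite sign-stabilization time for each pair $(i,j)$, then a maximum over the finitely many legitimate-agent/in-neighbor pairs) is essentially the standard proof of that cited proposition, so you are taking the same route, just spelled out in full. Your two cautionary points --- needing a non-i.i.d.\ version of the law of large numbers or a concentration inequality, and finiteness of $\V$ for the uniform $T_f$ --- are exactly the right ones, and the self-loop case is handled correctly via $\beta_{ii}(t)=1$.
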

\begin{proof}
    Follows directly from \cite[Proposition 1]{ourTRO}
\end{proof}

\begin{corollary}
    \label{cor:learning_in_neighbors}
    There exists a random finite time $T_f$ such that for all $t\geq T_f$ and for all legitimate agents $i$, trusted in-neighbor set consist of all legitimate neighbors of the agent $i$, that is: 
    $$\Ni^{\rm in}(t) = \Ni^{\rm in}\cap\LL.$$
\end{corollary}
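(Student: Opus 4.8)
The plan is to unwind the definition of the trusted in-neighbor set and then invoke Lemma~\ref{lemma:learning_in_neighbors} directly. Recall that for a legitimate agent $i$ the trusted in-neighbor set is defined as $\Nin(t) = \{ j \in \Nin \mid \bij(t) \geq 0 \}$. So what must be shown is that, for $t \geq T_f$, the condition $\bij(t) \geq 0$ holds exactly for those $j \in \Nin$ that are legitimate, and the set identity is then immediate.

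First I would take the random finite time $T_f$ guaranteed by Lemma~\ref{lemma:learning_in_neighbors}, which is finite almost surely, and work on that almost-sure event. Fix $t \geq T_f$ and a legitimate agent $i \in \LL$. For the forward inclusion, suppose $j \in \Nin(t)$, i.e. $j \in \Nin$ and $\bij(t) \geq 0$; if $j$ were malicious, Lemma~\ref{lemma:learning_in_neighbors} would give $\bij(t) < 0$, a contradiction, so $j \in \LL$ and hence $j \in \Nin \cap \LL$. For the reverse inclusion, if $j \in \Nin \cap \LL$ then Lemma~\ref{lemma:learning_in_neighbors} gives $\bij(t) \geq 0$, so $j \in \Nin(t)$. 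Combining the two inclusions yields $\Nin(t) = \Nin \cap \LL$ for all $t \geq T_f$.

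The only point worth a sentence is the self-loop: every agent has a self-loop, so $i \in \Nin$, and $i \in \LL$ by assumption, so $i$ should lie in both sides of the claimed identity; this is consistent because $\beta_{ii}(t) = 1 \geq 0$ by definition, placing $i$ in $\Nin(t)$. Since all of the above holds on the single almost-sure event on which $T_f$ is finite, the conclusion holds almost surely. I do not expect any real obstacle here: all the probabilistic content is contained in Lemma~\ref{lemma:learning_in_neighbors}, and this corollary merely re-expresses that statement in terms of the set $\Nin(t)$ that the update rule~\eqref{eq:trust_vector_update_rule} actually uses, which is the form needed in the subsequent propagation analysis.
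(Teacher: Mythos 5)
Your proposal is correct and follows the same route as the paper, which simply cites Lemma~\ref{lemma:learning_in_neighbors} together with the definition of $\Nin(t)$; you merely spell out the two inclusions and the self-loop case ($\beta_{ii}(t)=1$) explicitly. No gaps.
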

\begin{proof}
    Follows directly from Lemma \ref{lemma:learning_in_neighbors} and the update rule of the learning protocol given by \eqref{eq:trust_vector_update_rule}.
\end{proof}

Notice that corollary \ref{cor:learning_in_neighbors} shows that every legitimate agent can learn its in-neighbors correctly. Now, let $q\in V$ be a an arbitrary but fixed agent in the network. Our goal is to show that all legitimate agents learn the identity of $q$. This process requires information to propagate from agents receiving trust information directly from $q$ to other agents in the network, which motivates the following definitions:
 
 Define $\mD_q \subseteq \LL$ to be the subset of legitimate agents directly observing $q$, i.e. $\mD_q\triangleq\No_q\cap \LL$. Similarly, define $\mathcal{C}_q \subseteq \LL$ be the subset of legitimate agents not observing $q$, i.e. $\mC_q\triangleq\LL \backslash D_q$. These sets are illustrated in Fig. \ref{fig:graph_c_d}.
\begin{figure}
    \centering
    \subfigure[Example network]{\includegraphics[width=0.32\textwidth]{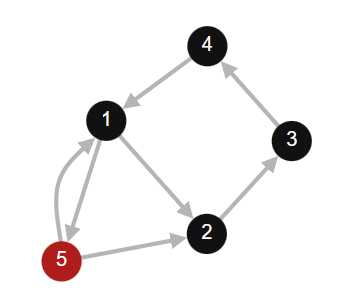}}
    \subfigure[$q=2$]{\includegraphics[width=0.32\textwidth]{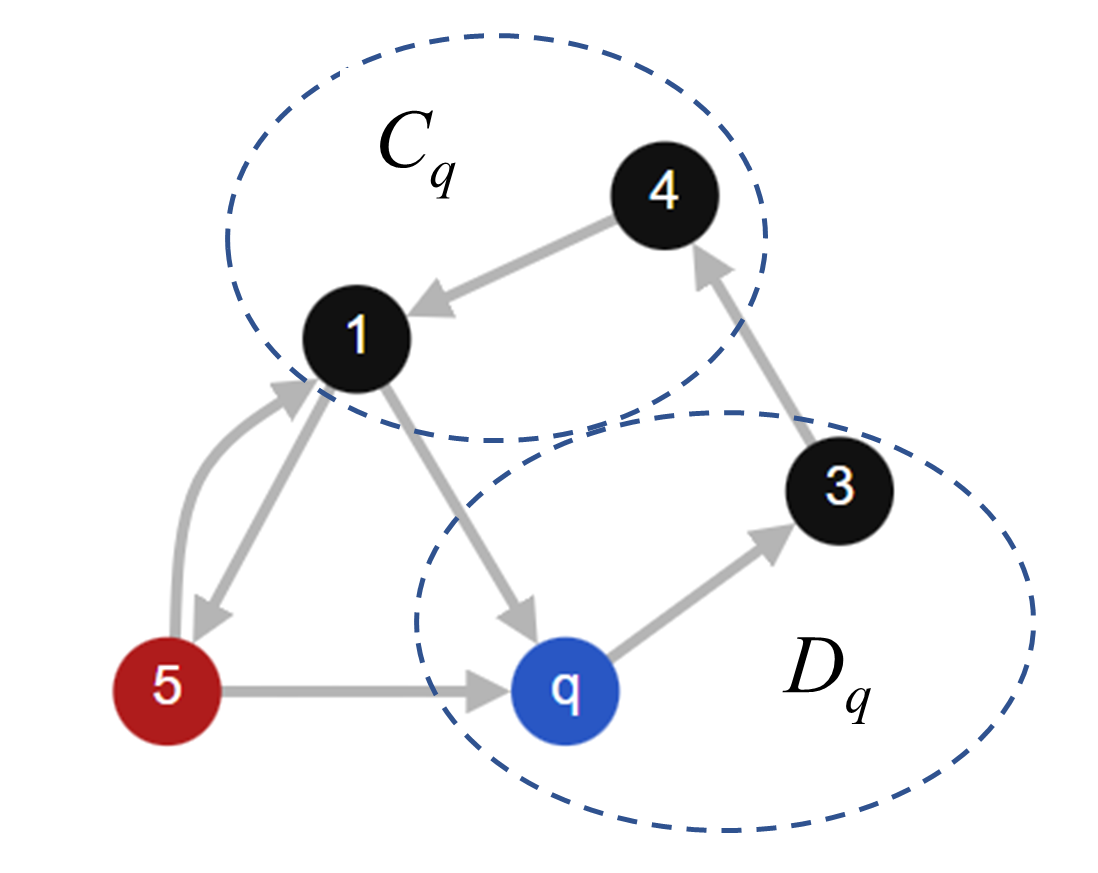}}
    \subfigure[$q=5$]{\includegraphics[width=0.32\textwidth]{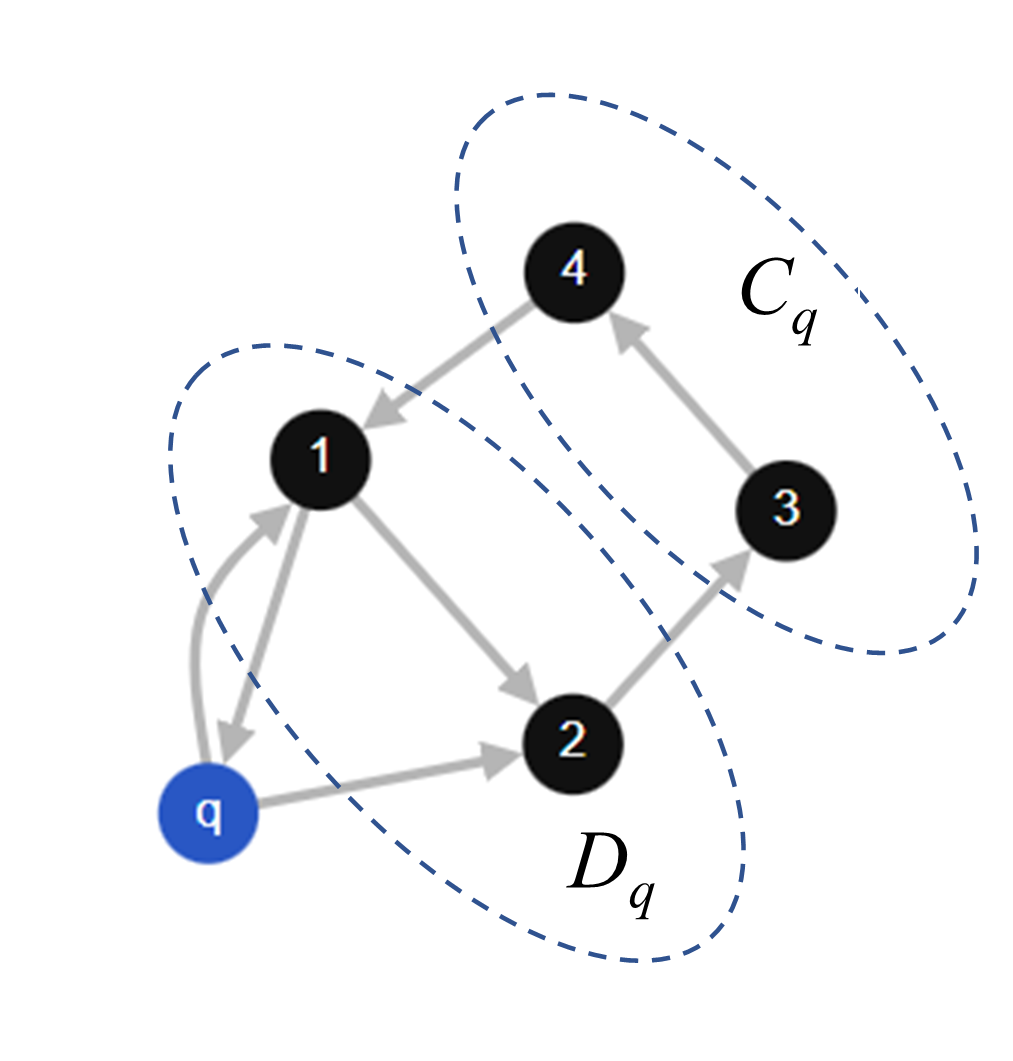}}
    \caption{
    (a) Network with four legitimate and one malicious nodes. Legitimate nodes are black and the malicious node is red.
    (b) Learning dynamics for agent $q=2$: Both agents $2$ and $3$ directly observe agent $2$, so $2, 3\in \mD_q$. $1$ and $4$ are in the set $\mC_q$ since they do not directly observe agent $2$. (c) Learning dynamics for agent $q=5$: Both agents $1$ and $2$ directly observe agent $5$, so $1,2\in \mD_q$. $3$ and $4$ are in $\mC_q$ since they do not observe $5$.}
    \label{fig:graph_c_d}
\end{figure}

These sets are defined for the sake of analysis and are not assumed to be known in practice. Notice that the set $\mD_q$ of observing agents is non-empty. This follows directly from Assumption \ref{assumption:graph-connectivity}.2, since there exists at least one legitimate agent $i \in \No_q$. 
On the other hand, the set $\mC_q$ can be empty if all agents are directly observing $q$. In that case, all legitimate agents will eventually learn the identity of $q$ by Corollary \ref{cor:learning_in_neighbors}. 

Now, we analyze the evolution of $o_{iq}(t)$ by writing the evolution of opinions about agent $q$ in matrix form. Let $u_q=|\mC_q|$, i.e. the number of agents not observing $q$. Without loss of generality, reorder the indices of agents such that $\mC_q=\{1,2,\dots, u_q\}$, and $\mD_q=\{u_q+1,\dots, |\LL| \}$.
 We denote the vector of trust estimates of legitimate agents about the agent $q$ by $o_{\mC_q}(t)$
$= \begin{bmatrix}o_{1,q}(t) & \dots & o_{u_q,q}(t)\end{bmatrix}^T$ for the agents in the set $\mC_q$ and $o_{\mD_q}(t)=\begin{bmatrix}o_{u_q+1,q}(t) &\dots&o_{|\LL|,q}(t) \end{bmatrix}^T$ for the agents in the set $\mD_q$. Finally, we denote the vector of trust estimates of malicious agents about the agent $q$ by $o_{\MM_q}(t)= \begin{bmatrix}o_{|\LL|+1,q}(t) & \dots & o_{N,q}(t)\end{bmatrix}^T$. Take an arbitrary agent $i\in \mC_q$. Using these reordered indices, we can rewrite the learning protocol as:

\begin{flalign}
\label{eq:matrix_learning_protocol}
    o_{iq}(t) &= 
      \sum_{j \in \Nin(t)} \frac{o_{jq}(t-1)}{|\Nin(t)|}  \\
      &= \sum_{j\in \mC_q} [W_q(t)]_{ij} o_{jq}(t-1) + \sum_{j \in \mD_q}[W_q(t)]_{ij} o_{jq}(t-1) + \sum_{j\in \MM}[W_q(t)]_{ij} o_{jq}(t-1),
\end{flalign}
where $[W_q(t)]_{ij}=\frac{1}{|\Nin(t)|}$ if $j\in \Nin(t)$ and $[W_q(t)]_{ij}=0$ otherwise. Here, $W_q(t)$ is a row-stochastic matrix with size $u_q \times N$. Then, we can divide $W_q(t)$ into three parts based on the sets $\mC_q$, $\mD_q$, $\MM$ as $W_q(t) = [W_{\mC_q}(t)\; W_{\mD_q}(t)\; W_{\MM_q}(t)]$ where the matrices $W_{\mC_q}(t)$, $W_{\mD_q}(t)$, $W_{\MM_q}(t)$ have sizes $u_q \times u_q$, $u_q\times |\mD_q|$, and $u_q \times |\MM|$ respectively. With this representation, we can express the update rule \eqref{eq:matrix_learning_protocol} in the matrix form as:
\begin{flalign}
    \label{eq:learning_matrix_form}
    o_{\mC_q}(t)=\begin{bmatrix}W_{\mC_q}(t) & W_{\mD_q}(t) & W_{\MM_q}(t)\end{bmatrix}\begin{bmatrix}
o_{\mC_q}(t-1) \\
o_{\mD_q}(t-1)\\
o_{\MM_q}(t-1)
\end{bmatrix},
\end{flalign}

Recall that there exists some random finite time $T_f$ such that all legitimate agents learn their in-neighbors correctly. Until the system reaches time $T_f$, malicious agents can affect the learning dynamics. Nevertheless, we will show that the legitimate agents can recover from that effect after reaching time $T_f$. Now, we focus our analysis on the system dynamics after time $T_f$.

\begin{lemma}
    \label{lemma:t_geq_tf}
    For $t \geq T_f$, the following hold almost surely:
    \begin{enumerate}
        \item[L\ref{lemma:t_geq_tf}.1] The matrix representing the contribution of malicious agents $W_{\MM_q}(t)=0$
        \item[L\ref{lemma:t_geq_tf}.2] $W_{\mC_q}(t)=\overline{W}_{\mC_q}$ for some constant matrix $\overline{W}_{\mC_q}$
        \item[L\ref{lemma:t_geq_tf}.3] $W_{\mD_q}(t)=\overline{W}_{\mD_q}$ for some constant matrix $\overline{W}_{\mD_q}$
        \item[L\ref{lemma:t_geq_tf}.4] $o_{\mD_q}(t)=\ind$.
    \end{enumerate}
\end{lemma}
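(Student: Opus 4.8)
The plan is to obtain all four claims as immediate consequences of Corollary~\ref{cor:learning_in_neighbors} and Lemma~\ref{lemma:learning_in_neighbors}, by unpacking the definitions of the weight matrix $W_q(t)$ and of the update rule~\eqref{eq:trust_vector_update_rule}. The single fact doing all the work is that, for $t \geq T_f$, the trusted in-neighbor set of every legitimate agent $i$ has stabilized to the fixed, malicious-free set $\Nin(t) = \Nin \cap \LL$. All four statements are deduced on the single almost-sure event on which Lemma~\ref{lemma:learning_in_neighbors} holds, so the ``almost surely'' qualifier propagates with no additional argument.

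For L\ref{lemma:t_geq_tf}.1--L\ref{lemma:t_geq_tf}.3 I would argue directly from $[W_q(t)]_{ij} = 1/|\Nin(t)|$ when $j \in \Nin(t)$ and $[W_q(t)]_{ij}=0$ otherwise. Fix $t \geq T_f$. First, any $j \in \MM$ satisfies $j \notin \Nin(t) = \Nin \cap \LL$, so the entire column block of $W_q(t)$ indexed by $\MM$ vanishes, i.e.\ $W_{\MM_q}(t) = 0$. Second, both the support $\Nin \cap \LL$ and the normalizer $|\Nin \cap \LL| \ge 1$ are independent of $t$, so every entry $[W_q(t)]_{ij}$ is constant in $t$; restricting to the column blocks indexed by $\mC_q$ and by $\mD_q$ then gives constant matrices $\overline{W}_{\mC_q}$ and $\overline{W}_{\mD_q}$ (which are in fact deterministic, being determined by $\G$ and $\LL$ alone).

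For L\ref{lemma:t_geq_tf}.4, fix an agent $i \in \mD_q = \No_q \cap \LL$. By definition agent $i$ observes $q$, so $q \in \Nin$, and the first two cases of~\eqref{eq:trust_vector_update_rule} apply: $o_{iq}(t) = 1$ if $\beta_{iq}(t) \geq 0$ and $o_{iq}(t) = 0$ if $\beta_{iq}(t) < 0$. Lemma~\ref{lemma:learning_in_neighbors} states precisely that, for $t \geq T_f$, one has $\beta_{iq}(t) \geq 0$ when $q \in \Nin \cap \LL$ and $\beta_{iq}(t) < 0$ when $q \in \Nin \cap \MM$. Hence $o_{iq}(t) = \mathbf{1}_{\{q \in \LL\}}$ for every $i \in \mD_q$, and stacking over $i$ gives $o_{\mD_q}(t) = \ind$.

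I do not expect a genuine obstacle; the proof is careful bookkeeping. The two things to keep straight are: the block partition $W_q(t) = [\,W_{\mC_q}(t)\;W_{\mD_q}(t)\;W_{\MM_q}(t)\,]$ must be aligned with the correct column index sets so that the three matrix claims land on the intended blocks; and, because every agent carries the self-loop $(q,q) \in \E$, a legitimate $q$ automatically belongs to $\mD_q$ and never to $\mC_q$, so that for every row $i \in \mC_q$ one indeed has $q \notin \Nin$ and only the averaging branch of~\eqref{eq:trust_vector_update_rule} is ever used, consistent with~\eqref{eq:learning_matrix_form}.
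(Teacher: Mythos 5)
Your proposal is correct and follows essentially the same route as the paper: all four claims are read off from the stabilization $\Nin(t)=\Nin\cap\LL$ for $t\geq T_f$ (Corollary~\ref{cor:learning_in_neighbors}), the definition of $W_q(t)$, and Lemma~\ref{lemma:learning_in_neighbors} for L\ref{lemma:t_geq_tf}.4. Your extra bookkeeping about the self-loop placing a legitimate $q$ in $\mD_q$ rather than $\mC_q$ is a correct detail the paper leaves implicit.
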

\begin{proof}
Assuming that $t\geq T_f$, by Corollary \ref{cor:learning_in_neighbors}, we have that $\Nin(t) \cap \MM=\emptyset$ for $i\in \LL$. Therefore, if $i\in \mC_q$ and $j$ is a malicious agent, then $j\not \in \Nin(t)$. By the definition of $W_q$, we have $[W_q(t)]_{ij}=0$ for all malicious $j$ as desired. Similarly, $\Nin(t)=\Nin\cap \LL$, so the update matrices $W_{\mC_q}(t)$ and $W_{\mD_q}(t)$ are constant for $t \geq T_f$. Finally, L\ref{lemma:t_geq_tf}.4 follows directly from Lemma~\ref{lemma:learning_in_neighbors}.
\end{proof}
\begin{remark} \label{remark:row-stochastic-update-matrix}
The matrix $[\overline{W}_{\mC_q}\; \overline{W}_{\mD_q}]$ is row stochastic.
\end{remark}
This follows from the fact that $W_q(t)$ is a row-stochastic matrix and that $W_{\MM_q}(t)$ is zero. Since agents in $\mD_q$ have already learned the trust of agent $q$ after time $T_f$, we now focus on the agents in $\mC_q$. For all $t\geq T_f+1$, we can describe the evolution of $o_{\mC_q}(t)$ as follows:
\begin{flalign}
\label{eq:pc_after_tf}
   o_{\mC_q}(t) = \overline{W}_{\mC_q}o_{\mC_q}(t-1)+\overline{W}_{\mD_q}o_{D_q}(t-1)
\end{flalign}
 
We want $o_{\mC_q}(t)=\ind$, i.e., $o_{\mC_q}(t)$ should be equal to a vector of ones if $q \in \LL$ and a vector of zeros if $q \in \MM$. We can define the error in the estimation of legitimate agents in $\mC_q(t)$ about the identity of the agent $q$ at time $t$ as:
\begin{flalign}
    \label{eq:delta_c_def}
    \errc(t) = o_{\mC_q}(t)-\ind
\end{flalign}
We want to show that $\norm{\errc(t)}\to 0$ as $t$ goes to infinity. Using \eqref{eq:pc_after_tf} we can represent $\errc(t)$ as
\begin{flalign}
    \label{eq:delta_recursive}
    \errc(t) &= \overline{W}_{\mC_q}o_{\mC_q}(t-1)+\overline{W}_{\mD_q}o_{\mD_q}(t-1) - \ind \nonumber \\
    &\overset{(a)}{=} \overline{W}_{\mC_q}o_{\mC_q}(t-1)+\overline{W}_{\mD_q}o_{\mD_q}(t-1) - (\overline{W}_{\mC_q}\ind + \overline{W}_{\mD_q}\ind) \nonumber \\
    &= \overline{W}_{\mC_q}(o_{\mC_q}(t-1)-\ind)+\overline{W}_{\mD_q}(o_{\mD_q}(t-1)-\ind) \nonumber \\
    &\overset{(b)}{=}\overline{W}_{\mC_q}(o_{\mC_q}(t-1)-\ind) \nonumber \\
    &= \overline{W}_{\mC_q}\errc(t-1),
\end{flalign}
where $(a)$ follows from the fact that $[\overline{W}_{\mC_q}\; \overline{W}_{\mD_q}]$ is row stochastic and $(b)$ follows from $o_{\mD_q}(t-1)=\ind$. By using \eqref{eq:delta_recursive} recursively, we obtain 
\begin{flalign}
\label{eq:delta_power}
    \errc(t) = \overline{W}_{\mC_q}^{t-T_f}\errc(T_f)
\end{flalign}
Now, we can bound the error norm:
\begin{flalign}
    \label{eq:delta_norm}
    \norm{\errc(t)}\leq \norm{\overline{W}_{\mC_q}^{t-T_f}}\norm{\errc(T_f)}.
\end{flalign}
Here, $\|\Delta_{\mC_q}(T_f)\|$ includes the error introduced by malicious agents before all agents learn their in-neighbors. Since the convergence of the error term $\|\Delta_{\mC_q}(t)\|$ depends on the convergence of $\overline{W}_{\mC_q}$, we analyze the matrix $\overline{W}_{\mC_q}$ next.

\subsection{Convergence of Weakly Chained Substochastic Matrices}
Now, we aim to show that $\overline{W}_{\mC_q}$ is \textit{convergent}, i.e. $\|\overline{W}_{\mC_q}^t\| \to 0$ as $t\to \infty$. In this part, we will show that $\overline{W}_{\mC_q}$ belongs to a family of \textit{convergent} substochastic matrices called \textit{weakly chained substochastic} matrices. This will conclude that the error term goes to $0$. First, we give some definitions.
\begin{definition}
Digraph of matrix: Let the square matrix $W\in \mathbb{R}^{n\times n}$ be non-negative, i.e. $W_{ij}\geq 0$ for all $i,j$. Hence, the graph of  $W$, denoted by $\GW{W}=(\VW{W},\EW{W})$ is the graph such that $\VW{W}=\{1,\dots,n\}$ and for all $i,j \in \{1,\dots,n\}$, $(i,j) \in \EW{W}$ if and only if $W_{ij}>0$.
\end{definition}
To analyze the convergence properties of $\overline{W}_{\mC_q}$, we define the index of contraction following \cite{azimzadeh2019fast}
\begin{definition}
Index of contraction: Let the matrix $W\in \mathbb{R}^{n\times n}$ be substochastic. Define the set $\hat{J}(W) \triangleq \{1\leq i \leq n : \sum_{j=1}^{n} W_{ij} < 1 \},$ and let the set $\hat{K}_i(W)$ be the set of all paths\footnote{We use path instead of walk in contrast to \cite{azimzadeh2019fast} in our definition, however these definitions are equivalent.} in the digraph of $W$ from $i$ to all $j\in \hat{J}(W)$. The index of contraction $\widehat{con}W$ associated with matrix $W$ is defined as:
\begin{flalign}
    \label{eq:index_of_contraction}
    \widehat{con}W \triangleq \max\left\{0, \sup_{i\not\in \hat{J}(W)}\left\{ \inf_{\omega \in \hat{K}_i(W)}\left\{|\omega| \right\}\right\}\right\},
\end{flalign}
where $|\omega|$ denotes the length of the path $\omega$. Also, we follow the conventions that $\inf \emptyset = \infty$ and $\sup \emptyset = -\infty$. Here, if all rows of $W$ sum to less than one, we have $|\hat J (W)|=n$. This implies that the supremum over $i\not \in \hat J(W)$ is $-\infty$, therefore, $\widehat{con}W=0$. Similarly, $\widehat{con}W$ is infinite if $\hat K_i(W)$ is empty, meaning there is no path from some row $i\not \in \hat J(W)$ to any row that sums to less than one.
\end{definition} \cite[Corollary 2.6]{azimzadeh2019fast} shows that a square substochastic matrix $W$ is convergent if and only if $\widehat{con}W$ is finite. We show example matrices with different contraction indices in Figure \ref{fig:contraction_index}.
\begin{figure}
    \centering
    \subfigure[$\widehat{con}W=3$]
    {\includegraphics[width=0.20\textwidth]{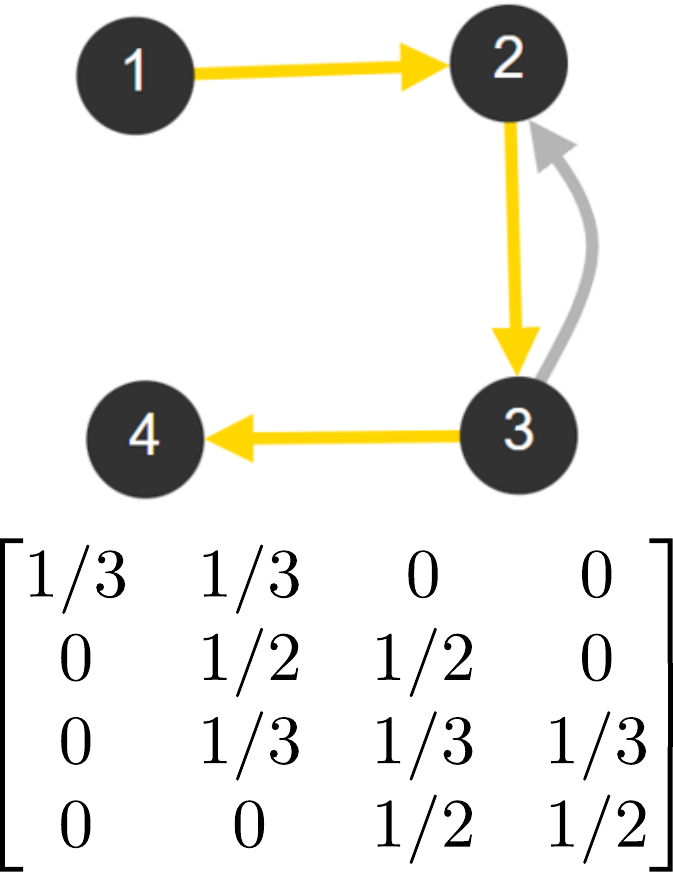}}
    \hspace{15mm}
    \subfigure[$\widehat{con}W=2$]
    {\includegraphics[width=0.20\textwidth]{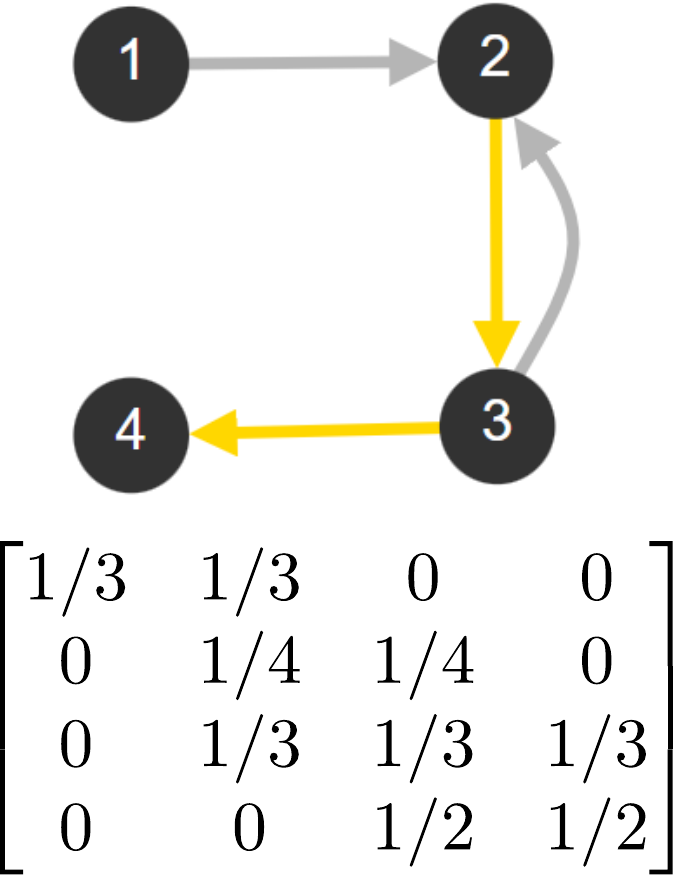}}
    \hspace{15mm}
    \subfigure[$\widehat{con}W=\infty$]
    {\includegraphics[width=0.20\textwidth]{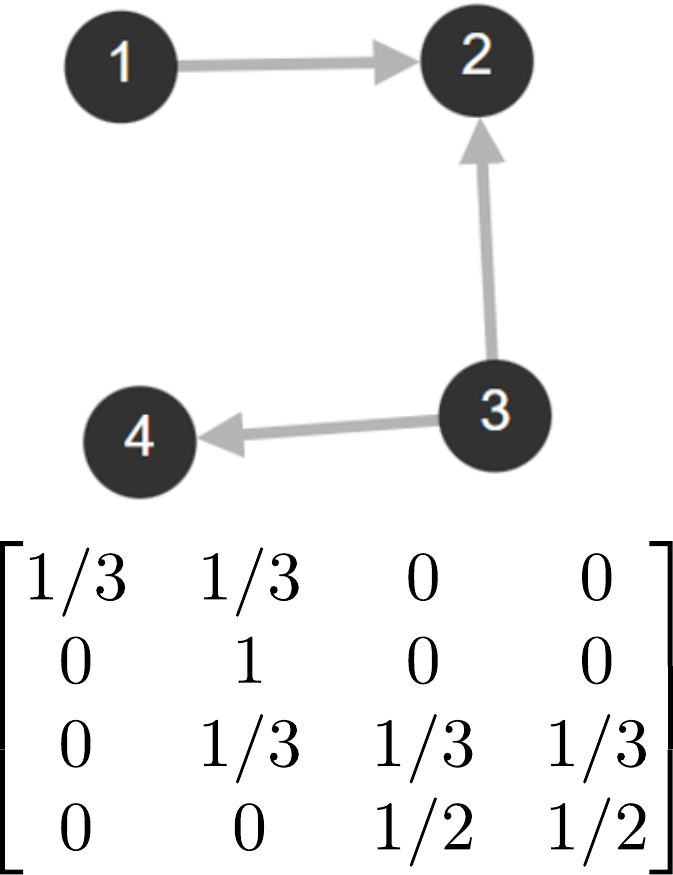}}
    \caption{Three matrices with different contraction indices and the corresponding graphs. The path achieving the contraction index is given in yellow. (a) The only row that sums to less than one is row $1$. Therefore, the path with the maximum length from agent $1$ to another agent has a length of $3$. (b) Row $2$ also sums to less than one. Therefore, the longest path is the one from agent $2$ to $4$. (c) The only row that sums to less than one is row $1$. Since there is no path from agent $1$ to agent $3$ or $4$, the index of contraction is $\infty$.}
    \label{fig:contraction_index}
\end{figure}
We call a substochastic matrix with finite contraction index \textit{weakly chained substochastic matrix}.  
\begin{remark}\label{remark:weakly-chained-substochastic}
    Matrix $W$ is a weakly chained substochastic matrix if and only if for all rows $i$ that are not in the set $\hat{J}(W)$, set $\hat{K}_i(W)$ is non-empty, i.e  there is a path $i \rightarrow i_1 \rightarrow \dots \rightarrow i_j$ in $\GW{W}$ such that row $i_j$ sums to less than one. Moreover, a weakly chained substochastic matrix is convergent.
\end{remark}
This remark follows directly from the definition of the index of contraction and \cite[Corollary 2.6]{azimzadeh2019fast}.

The following sequence of results will show that $\overline{W}_{\mC_q}$ is weakly chained substochastic. We will first establish a relation between the graph that describes the network and the digraph of $\overline{W}_{\mC_q}$. In particular, we will establish that the links $\EW{\overline{W}_{\mC_q}}$ in the digraph of $\overline{W}_{\mC_q}$ are the inversion of links in the original graph. Then use assumptions of strong connectivity and existence of a directly observing agent to conclude that $\overline{W}_{\mC_q}$ is weakly chained substochastic.

\begin{lemma} \label{lemma:weight-matrix-path}
Let $\overline{W}_{\mC_q} \in \mathbb{R}^{u_q\times u_q}$ be defined as before, and let $\G_{\mC_q}$ be the subgraph of $\G_{\mathcal{L}}$ induced by the set of agents $\mC_q$. Then, $(i,j)\in \GW{\overline{W}_{\mC_q}}$ if and only if $(j,i)\in \G_{\mC_q}$. In other words, $\G_{\mC_q}$ is the digraph of $\overline{W}_{\mC_q}^T$.
\end{lemma}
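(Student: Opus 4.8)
The plan is to unpack the definition of $[\overline{W}_{\mC_q}]_{ij}$ from the learning protocol and match entries directly with edges of $\G_{\mC_q}$, using Corollary~\ref{cor:learning_in_neighbors} to identify the trusted in-neighbor set with the legitimate in-neighbor set. Concretely, by L\ref{lemma:t_geq_tf}.2 and the construction preceding \eqref{eq:learning_matrix_form}, for $i,j \in \mC_q$ we have $[\overline{W}_{\mC_q}]_{ij} = 1/|\Nin(t)|$ if $j \in \Nin(t)$ and $0$ otherwise, where for $t \geq T_f$ the set $\Nin(t) = \Nin \cap \LL$ is constant. So $[\overline{W}_{\mC_q}]_{ij} > 0$ precisely when $j \in \Nin \cap \LL$, i.e.\ when $(j,i) \in \E$ and both $i,j$ are legitimate; restricting to $i,j \in \mC_q$, this says $(j,i)$ is an edge of the induced subgraph $\G_{\mC_q}$.

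**First** I would state the entry formula explicitly, noting that $|\Nin \cap \LL| \ge 1$ because every agent has a self-loop (so the denominator is never zero and the ``otherwise'' branch is well-defined). **Then** I would argue the two directions of the ``if and only if'': $(i,j) \in \EW{\overline{W}_{\mC_q}}$ iff $[\overline{W}_{\mC_q}]_{ij} > 0$ (definition of the digraph of a nonnegative matrix) iff $j \in \Nin(t) = \Nin \cap \LL$ iff $j \in \Nin$ with $i,j \in \mC_q \subseteq \LL$ iff $(j,i) \in \E$ with $i,j$ in $\mC_q$ iff $(j,i) \in \G_{\mC_q}$ (definition of induced subgraph). **Finally** the last sentence is immediate: $(i,j) \in \EW{\overline{W}_{\mC_q}}$ iff $(j,i) \in \G_{\mC_q}$ is exactly the statement that $\G_{\mC_q}$ has the transposed edge set of $\GW{\overline{W}_{\mC_q}}$, which is the digraph of $\overline{W}_{\mC_q}^T$.

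**The one subtlety** to be careful about — and the only place this is more than bookkeeping — is that the induced subgraph $\G_{\mC_q}$ is taken inside $\G_\LL$, so when I translate ``$j \in \Nin \cap \LL$'' to ``$(j,i) \in \G_{\mC_q}$'' I must use that $i \in \mC_q$ already forces $i \in \LL$, and that $j \in \mC_q$ (not merely $j \in \LL$) is what the index range of the block $\overline{W}_{\mC_q}$ guarantees; a legitimate in-neighbor $j$ of $i$ that lies in $\mD_q$ contributes to $\overline{W}_{\mD_q}$, not $\overline{W}_{\mC_q}$, and correspondingly $(j,i)$ is not an edge of the induced subgraph on $\mC_q$. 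Once the index ranges are tracked correctly there is no real obstacle; this lemma is the routine ``dictionary'' step that sets up the subsequent connectivity argument showing $\overline{W}_{\mC_q}$ is weakly chained substochastic.
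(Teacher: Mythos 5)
Your proposal is correct and follows essentially the same route as the paper's proof: both directions are obtained by unpacking the definition of the matrix entries from the learning protocol and identifying $[\overline{W}_{\mC_q}]_{ij}>0$ with $j$ being a (legitimate) in-neighbor of $i$ within $\mC_q$. Your version is somewhat more explicit about the role of Corollary~\ref{cor:learning_in_neighbors}, the nonzero denominator, and the block index ranges, but these are refinements of the same argument rather than a different approach.
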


\begin{proof} 
Let $(i,j) \in \GW{\overline{W}_{\mC_q}}$. Then, by definition of digraph of a matrix, we have that $[\overline{W}_{\mC_q}]_{ij}>0$. So, it means that agent $i \in \mC_q$ is receiving information from agent $j \in \mC_q$, by the learning protocol \eqref{eq:matrix_learning_protocol}. Thus, there must be an edge $(j,i)$ in $\G_{\mC_q}$. Similarly, if $(j,i)$ is an edge in $\G_{\mC_q}$, then agent $j$ is an in-neighbor of agent $i$. So, agent $i$ is receiving informattion from agent $j$, which means that $[\overline{W}_{\mC_q}]_{ij}>0$. 
\end{proof}

\begin{corollary} \label{cor:path-inversion-in-matrix}
If there is a path $v_1 \to v_2\to \dots \to v_l$ in $\G_{\mC_q}$, then there is a path $v_l \to v_{l-1}\to \dots \to v_1$ in $\GW{\overline{W}_{\mC_q}}$.
\end{corollary}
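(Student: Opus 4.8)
\textbf{Proof proposal for \cref{cor:path-inversion-in-matrix}.}
The plan is to apply \cref{lemma:weight-matrix-path} edge by edge along the given path and then concatenate the resulting reversed edges. Concretely, suppose $v_1 \to v_2 \to \dots \to v_l$ is a path in $\G_{\mC_q}$, so that $(v_k, v_{k+1}) \in \G_{\mC_q}$ for each $k \in \{1,\dots,l-1\}$. By \cref{lemma:weight-matrix-path}, for each such $k$ we have $(v_{k+1}, v_k) \in \GW{\overline{W}_{\mC_q}}$. Lining these up in order $k = l-1, l-2, \dots, 1$ yields the edge sequence $v_l \to v_{l-1} \to \dots \to v_1$ in $\GW{\overline{W}_{\mC_q}}$, which is exactly the claimed reversed walk.

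The one point that deserves a sentence is that we want a \emph{path}, not merely a walk: a path requires the vertices $v_1,\dots,v_l$ to be distinct. But this is inherited for free, since the vertex set of the reversed sequence is the same set $\{v_1,\dots,v_l\}$ as that of the original path, and the original vertices are distinct by hypothesis. Hence the reversed sequence is a genuine path in $\GW{\overline{W}_{\mC_q}}$. (If one prefers to argue via walks, as in the footnoted convention of \cite{azimzadeh2019fast}, the statement is even more immediate, and the two formulations are equivalent.)

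I do not expect any real obstacle here; the corollary is a direct, almost mechanical consequence of \cref{lemma:weight-matrix-path}, whose content is that the digraph of $\overline{W}_{\mC_q}$ is the transpose (edge-reversal) of $\G_{\mC_q}$. If anything, the only thing to be careful about is not to overstate: \cref{lemma:weight-matrix-path} is an ``if and only if'' at the level of single edges, so reversing a path of $\G_{\mC_q}$ gives a path of $\GW{\overline{W}_{\mC_q}}$ and vice versa, but we only need (and only claim) the one direction used later in the connectivity argument for $\overline{W}_{\mC_q}$.
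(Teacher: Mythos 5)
Your proposal is correct and matches the paper's own proof: both apply \cref{lemma:weight-matrix-path} edge by edge to reverse each link and concatenate the reversed edges into the path $v_l \to \dots \to v_1$. Your extra remark that vertex-distinctness is inherited (so the result is a path, not merely a walk) is a small, harmless addition that the paper leaves implicit.
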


\begin{proof}
If there is an edge $v_i \to v_{i+1}$ in $\G_{\mC_q}$, then by Lemma \ref{lemma:weight-matrix-path}, there exists and edge $v_{i+1}\to v_{i}$ in $\GW{\overline{W}_{\mC_q}}$. Since this holds for each $i=1,\dots, l-1$, $v_l\to v_{l-1}\to \dots v_{1}$ is a path in $\GW{\overline{W}_{\mC_q}}$.
\end{proof}

\begin{theorem}
    \label{thm:W_C_weakly_chained}
    For all agents $q$, given that the set $\mC_q$ is non-empty, the update matrix $\overline{W}_{\mC_q}$ is a weakly chained substochastic matrix. Moreover, $\overline{W}_{\mC_q}$ is convergent.
\end{theorem}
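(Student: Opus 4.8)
The plan is to verify directly the two defining properties from Remark~\ref{remark:weakly-chained-substochastic} for $\overline{W}_{\mC_q}$: that it is substochastic, and that every row $i\notin\hat{J}(\overline{W}_{\mC_q})$ admits a path in $\GW{\overline{W}_{\mC_q}}$ to some row in $\hat{J}(\overline{W}_{\mC_q})$; convergence is then immediate from that same remark. Substochasticity is direct: $\overline{W}_{\mC_q}$ is obtained from the non-negative, row-stochastic matrix $[\overline{W}_{\mC_q}\;\overline{W}_{\mD_q}]$ (Remark~\ref{remark:row-stochastic-update-matrix}) by discarding the columns indexed by $\mD_q$, so each of its rows sums to at most $1$.

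First I would pin down which rows are deficient. For $t\ge T_f$, Corollary~\ref{cor:learning_in_neighbors} gives that the trusted in-neighbor set of $i\in\mC_q$ is $\mathcal{N}^{\rm in}_i\cap\LL$, each such neighbor weighted by $1/|\mathcal{N}^{\rm in}_i\cap\LL|$; since $\LL=\mC_q\cup\mD_q$ is a disjoint union, row $i$ of $\overline{W}_{\mC_q}$ sums to $|\mathcal{N}^{\rm in}_i\cap\mC_q|/|\mathcal{N}^{\rm in}_i\cap\LL|$. Hence $i\in\hat{J}(\overline{W}_{\mC_q})$ (its row sums to strictly less than one) if and only if agent $i$ has at least one in-neighbor in $\mD_q$ in the communication graph.

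Then I would carry out the core step: given $i\in\mC_q\setminus\hat{J}(\overline{W}_{\mC_q})$, exhibit a path in $\GW{\overline{W}_{\mC_q}}$ from $i$ to a deficient row. By Assumption~\ref{assumption:graph-connectivity}.2 the set $\mD_q$ is non-empty, so fix $d\in\mD_q$; by Assumption~\ref{assumption:graph-connectivity}.1 there is a simple path $d=w_1\to w_2\to\cdots\to w_r=i$ in $\GL$. Let $s$ be the largest index with $w_s\in\mD_q$; then $1\le s<r$ since $w_r=i\notin\mD_q$, and by maximality of $s$ the vertices $w_{s+1},\dots,w_r$ all lie in $\mC_q$, so $w_{s+1}\to\cdots\to w_r$ is a path in $\G_{\mC_q}$. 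The edge $w_s\to w_{s+1}$ exhibits an in-neighbor of $w_{s+1}$ lying in $\mD_q$, so $w_{s+1}\in\hat{J}(\overline{W}_{\mC_q})$ by the previous paragraph. Applying Corollary~\ref{cor:path-inversion-in-matrix} to the path $w_{s+1}\to\cdots\to w_r$ produces the reversed path $i=w_r\to w_{r-1}\to\cdots\to w_{s+1}$ in $\GW{\overline{W}_{\mC_q}}$, which terminates at a deficient row; thus $\hat{K}_i(\overline{W}_{\mC_q})\ne\emptyset$. Since $i$ was arbitrary, Remark~\ref{remark:weakly-chained-substochastic} yields that $\overline{W}_{\mC_q}$ is weakly chained substochastic, hence convergent.

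I expect the main obstacle to be bookkeeping on edge directions rather than any quantitative estimate: the digraph of $\overline{W}_{\mC_q}$ reverses the communication edges (Lemma~\ref{lemma:weight-matrix-path}), so the connecting path must be routed from $\mD_q$ toward $i$ and then cut at its \emph{last} vertex in $\mD_q$, ensuring the tail lies entirely in $\mC_q$ and, after reversal, starts at $i$ and ends at a deficient row. It is also worth checking that the degenerate case $s+1=r$ cannot occur when $i\notin\hat{J}(\overline{W}_{\mC_q})$ (otherwise $w_s\to i$ would put $i$ in $\hat{J}$), and recalling that the hypothesis $\mC_q\ne\emptyset$ is precisely $u_q\ge1$, so the statement is non-vacuous.
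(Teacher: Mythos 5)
Your proposal is correct and follows essentially the same route as the paper's proof: identify $\hat{J}(\overline{W}_{\mC_q})$ with the agents in $\mC_q$ having an in-neighbor in $\mD_q$, take a directed path in $\GL$ from some $d\in\mD_q$ to $i$ via strong connectivity, cut it at the last vertex in $\mD_q$ so the tail lies in $\mC_q$, and reverse via Corollary~\ref{cor:path-inversion-in-matrix} to land in a deficient row. Your explicit row-sum formula and the check that $s+1=r$ cannot occur are slightly more detailed than the paper's treatment, but the argument is the same.
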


\begin{proof}
 Let $i \in \mC_q$. If agent $i$ has a neighbor $d \in \mD_q$ directly observing agent $q$, row $i$ must sum up to less than one since agent $i$ receives information from $d$ and $d\not \in \mC_q$. So, $i \in \hat J(\overline{W}_{\mC_q})$.
 
 Now, assume agent $i$ doesn't have a directly observing neighbor, i.e. $i\not \in \hat J(\overline{W}_{\mC_q})$. We know that there exists some agent $d\in \mD_q$ that directly observes agent $q$ by Assumption~\ref{assumption:graph-connectivity}.1 and Assumption~\ref{assumption:graph-connectivity}.2. By Assumption~\ref{assumption:graph-connectivity}.1, the subgraph induced by legitimate agents are strongly connected, so there exists a path 

$$d=i_0 \to i_1 \to i_2 \to \dots i_l\to i$$

\noindent in $\G_{\mathcal{L}}$ for each $i_j \in \mathcal{L}$ where each arrow denotes a directed edge.\footnote{$l\geq 1$ since agent $i$ does not have a directly observing neighbor} Now, choose the largest $j$ such that $i_j\in \mD_q$, and consider the path
$$i_{j}\to i_{j+1}\to \dots \to i_{l}\to i$$
Here, since $j$ is chosen as the largest $j$ s.t. $i_j \in \mD_q$, we have that $i_{j+1}, \dots, i_{l}, i \in \mC_q$. Moreover, we assumed $i\not \in \hat J(\overline{W}_{\mC_q})$, so $j < l$ since $i$ does not have a directly observing neighbor.

Now, we know, $i_{j+1}$ has a neighbor directly observing $q$, i.e. $i_j$. Therefore, row $i_{j+1}$ of $\overline{W}_{\mC_q}$ sums to less than $1$, meaning that $i_{j+1} \in \hat J(\overline{W}_{\mC_q})$. From Corollary~\ref{cor:path-inversion-in-matrix}, there exists a path 
$$i \to i_l \to i_{l-1}\to \dots \to i_{j+2}\to i_{j+1}$$
in the graph $\GW{\overline{W}_{\mC_q}}$. So, in the digraph $\GW{\overline{W}_{\mC_q}}$ there exists a path from $i$ to a row summing to less than one, $i_{j+1}$, as desired. Hence, $\hat K_i(\overline{W}_{\mC_q})$ is non-empty for $i\not \in \hat J(\overline{W}_{\mC_q})$. Therefore, $\overline{W}_{\mC_q}$ is weakly chained substochastic and convergent by Remark~\ref{remark:weakly-chained-substochastic}.
\end{proof}
\begin{corollary}
\label{cor:c_q_almost_surely}
    For all agents $q\in \V$ where the set $\mC_q$ is non-empty, $o_{\mC_q}(t)$ almost surely converges to $\ind$ where $\ind$ is a vector with all values equal to $1$ if $q \in \LL$ and to $0$ if $q \in \MM$.
\end{corollary}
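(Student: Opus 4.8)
The plan is to chain together the deterministic contraction established in Theorem~\ref{thm:W_C_weakly_chained} with the almost-sure finiteness of the learning time $T_f$ from Lemma~\ref{lemma:learning_in_neighbors}. I would fix an agent $q$ with $\mC_q \neq \emptyset$ and work on the probability-one event $\{T_f < \infty\}$ guaranteed by Lemma~\ref{lemma:learning_in_neighbors}. On this event, Corollary~\ref{cor:learning_in_neighbors} implies that for every $t \ge T_f$ the trusted in-neighbor set of each legitimate agent equals $\Nin \cap \LL$, a deterministic set; hence the block $\overline{W}_{\mC_q}$ appearing in \eqref{eq:learning_matrix_form} is a fixed (sample-path-independent) substochastic matrix, and the recursion leading to \eqref{eq:delta_power} gives $\errc(t) = \overline{W}_{\mC_q}^{\,t - T_f}\errc(T_f)$ for all $t \ge T_f$.

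Next I would bound the initial error: since every opinion lies in $[0,1]$ and $\ind$ has entries in $\{0,1\}$, the vector $\errc(T_f) = o_{\mC_q}(T_f) - \ind$ has entries in $[-1,1]$, so $\|\errc(T_f)\| \le \sqrt{u_q}$ — finite, though random. Combining this with \eqref{eq:delta_norm},
\[
\|\errc(t)\| \le \|\overline{W}_{\mC_q}^{\,t-T_f}\|\,\|\errc(T_f)\| .
\]
By Theorem~\ref{thm:W_C_weakly_chained}, $\overline{W}_{\mC_q}$ is convergent, i.e.\ $\|\overline{W}_{\mC_q}^{\,n}\| \to 0$ as $n \to \infty$; since $T_f$ is finite on the event under consideration, $t - T_f \to \infty$ as $t \to \infty$, so $\|\errc(t)\| \to 0$. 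Unpacking \eqref{eq:delta_c_def}, this is exactly $o_{\mC_q}(t) \to \ind$, and as the event has probability one the convergence holds almost surely.

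The argument has essentially no hard step — it is bookkeeping — but the one point deserving care is the interplay between the randomness of $T_f$ and the deterministic limit: the key observation is that $\overline{W}_{\mC_q}$ does not depend on the sample path (it is pinned down by $\Nin \cap \LL$ via Corollary~\ref{cor:learning_in_neighbors}), so the only residual randomness after time $T_f$ is the finite, bounded quantity $\errc(T_f)$, which is annihilated in the limit by the deterministic decay $\|\overline{W}_{\mC_q}^{\,n}\| \to 0$. A secondary point worth stating explicitly is that the case $\mC_q = \emptyset$ is excluded by hypothesis and is instead covered directly by Corollary~\ref{cor:learning_in_neighbors}, so Theorem~\ref{thm:W_C_weakly_chained} genuinely applies here.
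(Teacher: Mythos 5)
Your proposal is correct and follows essentially the same route as the paper's proof: invoke Corollary~\ref{cor:learning_in_neighbors} to obtain the almost surely finite $T_f$, bound $\norm{\errc(T_f)}\leq\sqrt{u_q}$ using the fact that opinions lie in $[0,1]$, and let the convergence of $\overline{W}_{\mC_q}$ from Theorem~\ref{thm:W_C_weakly_chained} drive the error to zero via \eqref{eq:delta_norm}. Your added remarks on why $\overline{W}_{\mC_q}$ is deterministic and on the $\mC_q=\emptyset$ case are sound but not needed beyond what the paper states.
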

\begin{proof}
Remember that the error is defined as $\errc(t) = o_{\mC_q}(t)-\ind$. By Corollary \ref{cor:learning_in_neighbors} we know that there exists a finite time $T_f$ such that for all $t\geq T_f+1$ we have
\begin{equation}
    \norm{\errc(t)}\leq \norm{\overline{W}_{\mC_q}^{t-T_f}}\norm{\errc(T_f)} \tag{\eqref{eq:delta_norm}}.
\end{equation}
Since both $o_{\mC_q}(t)$ and $\ind$ are in $[0,1]^{u_q}$, we have $\norm{\errc(T_f)}\leq 
\sqrt{u_q}$. By Theorem \ref{thm:W_C_weakly_chained}, we have that $\norm{\overline{W}_{\mC_q}^{t-T_f}} \to 0$. Therefore, $\norm{\errc(t)}\to 0$ almost surely.
\end{proof}

\subsection{Main Results}
In this part, we present our main results which show that all legitimate agents can learn the trustworthiness of of all agents in the system. Let $\ol(t)$ denote the trustworthiness estimation of all legitimate agents about an agent $q$ at time $t$. We show that this vector converges to $\ind$.
\begin{theorem}[Convergence to the true trust vector almost surely]
    \label{thm:as_conv}
    For all agents $q\in \V$, $\ol(t)$ converges almost surely to the true trust vector $\ind$, where $\ind$ is an $|L|\times 1$ vector with all of its values equal to $1$ if $q \in \LL$ and equal to $0$ if $q \in \MM$.
\end{theorem}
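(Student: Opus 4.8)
The plan is to read $o_{\LL_q}(t)$ through the partition $\LL = \mC_q \cup \mD_q$ fixed earlier and to treat the two blocks separately, disposing of the degenerate case $\mC_q = \emptyset$ on its own. First I would fix an arbitrary $q \in \V$. By Assumption~\ref{assumption:graph-connectivity}.2 the set $\mD_q$ is always non-empty, so the partition is well defined, and after the index reordering introduced above we may write $o_{\LL_q}(t)$ as the stacked vector with top block $o_{\mC_q}(t)$ and bottom block $o_{\mD_q}(t)$.

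For the $\mD_q$ block I would invoke Lemma~\ref{lemma:t_geq_tf}.4 (equivalently, Corollary~\ref{cor:learning_in_neighbors} together with the update rule \eqref{eq:trust_vector_update_rule}): there is a random finite time $T_f$ such that $o_{\mD_q}(t) = \ind$ exactly for all $t \geq T_f$, so this block is eventually constant and converges almost surely to $\ind$. For the $\mC_q$ block, if $\mC_q \neq \emptyset$ I would apply Corollary~\ref{cor:c_q_almost_surely}, which already gives $o_{\mC_q}(t) \to \ind$ almost surely; this is the step where the substochastic-contraction machinery (Theorem~\ref{thm:W_C_weakly_chained}, bounding $\|\overline{W}_{\mC_q}^{\,t-T_f}\|$) does the work. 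If instead $\mC_q = \emptyset$, then $o_{\LL_q}(t) = o_{\mD_q}(t)$ and the $\mD_q$ argument alone already settles the claim.

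Finally I would combine the two pieces on the almost-sure event on which both $T_f < \infty$ and $\|\errc(t)\| \to 0$ hold — a finite intersection of almost-sure events, hence itself almost sure. On that event, for $t \geq T_f$ the difference $o_{\LL_q}(t) - \ind$ has top block $\errc(t)$ and bottom block $0$, by \eqref{eq:delta_norm} and Lemma~\ref{lemma:t_geq_tf}.4, so its norm tends to $0$. Since $q \in \V$ was arbitrary, this yields the theorem.

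I do not expect a genuine obstacle: the statement is essentially a bookkeeping consequence of Lemma~\ref{lemma:t_geq_tf} and Corollaries~\ref{cor:learning_in_neighbors} and \ref{cor:c_q_almost_surely}. The only points needing a little care are the case split on whether $\mC_q$ is empty, and making sure the relevant almost-sure events (finiteness of $T_f$ from Lemma~\ref{lemma:learning_in_neighbors}, and convergence of $\errc(t)$ from Corollary~\ref{cor:c_q_almost_surely}) are intersected before passing to the conclusion so that the convergence of the full stacked vector holds on a single almost-sure event.
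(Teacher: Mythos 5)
Your proposal is correct and follows essentially the same route as the paper's proof: split on whether $\mC_q$ is empty, handle $o_{\mD_q}(t)$ via Lemma~\ref{lemma:t_geq_tf}.4 and $o_{\mC_q}(t)$ via Corollary~\ref{cor:c_q_almost_surely}, then combine. Your explicit remark about intersecting the two almost-sure events is a slight refinement of the paper's presentation (which uses a triangle inequality on the stacked error vector), but the substance is identical.
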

\begin{proof}
    Without loss of generality, we reorder the indices of agents such that $\mC_q=\{1,2,\dots, u_q\}$, and $\mD_q=\{u_q+1,\dots, |\LL| \}$ where $\mC_q$ is the set of legitimate agents not observing $q$ and $\mD_q$ is the set of legitimate agents directly observing $q$. We have two different cases where the set $\mC_q$ is empty and non-empty. First assume that $\mC_q$ is empty. We know that $\ol(t) = o_{\mD_q}$. There exists a finite time $T_f$ such that for all $t\geq T_f+1$ we have $\ol(t)=\ind$ by Lemma \ref{lemma:t_geq_tf}. Hence, $\ol(t)$ converges to $\ind$ almost surely.
    
    Now, assume that $\mC_q$ is non-empty. Hence, we can represent $\ol(t)$ as $\ol(t) = \begin{bmatrix}
    o_{\mC_q}(t)\\
    o_{\mD_q}(t)
    \end{bmatrix}$. Define $\Delta_{\LL_q}(t)=\ol(t)-\ind$. Using the triangle inequality we obtain 
    \begin{flalign*}
        \norm{\Delta_{\LL_q}(t)} &\leq \norm{o_{\mC_q}(t)-\ind} + \norm{o_{\mD_q}(t)-\ind} \\
        &=\norm{\errc(t)} + \norm{o_{\mD_q}(t)-\ind},
    \end{flalign*}
    where $\errc(t)$ is the same one with \eqref{eq:delta_c_def}. Now, assume that $t\geq T_f+1$. Then we have $\norm{o_{\mD_q}-\ind}=0$ by Lemma \ref{lemma:t_geq_tf}. Moreover, by Corollary \ref{cor:c_q_almost_surely}, we have that $\norm{\errc(t)}\to 0$ almost surely. Hence, we can conclude that $\norm{\Delta_{\LL_q}(t)}\to 0$ and $\ol(t)$ converges to $\ind$ almost surely.
\end{proof}

\begin{theorem}[Convergence in mean to the true trust vector]
\label{thm:convMean}
    For all agents $q\in \V$  and $r\geq 1$, $\ol(t)$ converges in mean to the true trust vector $\ind$. That is,
    \begin{equation}
        \lim_{t \to \infty} \E[\norm{\ol(t)-\ind}^r] = 0.
    \end{equation}
\end{theorem}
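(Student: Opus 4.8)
The plan is to obtain convergence in mean for free from the almost sure convergence already established in Theorem~\ref{thm:as_conv}, by exhibiting a deterministic, $t$-independent dominating bound and invoking the bounded convergence theorem. The key observation is that the quantities $\norm{\ol(t)-\ind}^r$ are uniformly bounded over all $t$ and all sample paths: every legitimate agent's opinion $o_{iq}(t)$ lies in $[0,1]$ by the construction of the update rule~\eqref{eq:trust_vector_update_rule}, and $\ind$ is a $\{0,1\}$-vector, so each coordinate of $\ol(t)-\ind$ lies in $[-1,1]$. Hence $\norm{\ol(t)-\ind}\le \sqrt{|\LL|}$ and therefore $\norm{\ol(t)-\ind}^r \le |\LL|^{r/2}$ for every $t\ge 0$.

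Given this, the argument is short. By Theorem~\ref{thm:as_conv}, $\norm{\ol(t)-\ind}\to 0$ almost surely, and since $x\mapsto x^r$ is continuous on $[0,\infty)$, also $\norm{\ol(t)-\ind}^r\to 0$ almost surely. This sequence is dominated by the constant (hence integrable) random variable $|\LL|^{r/2}$, so the dominated/bounded convergence theorem yields $\lim_{t\to\infty}\E[\norm{\ol(t)-\ind}^r]=0$, which is exactly the claim.

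For completeness one may also spell out a self-contained version that avoids citing the convergence theorem as a black box: split $\E[\norm{\ol(t)-\ind}^r] = \E\bigl[\norm{\ol(t)-\ind}^r\,\1_{\{t< T_f+1\}}\bigr] + \E\bigl[\norm{\ol(t)-\ind}^r\,\1_{\{t\ge T_f+1\}}\bigr]$. The first term is at most $|\LL|^{r/2}\,\Pr(T_f\ge t)$, which tends to $0$ because $T_f$ is almost surely finite by Corollary~\ref{cor:learning_in_neighbors}. On the event $\{t\ge T_f+1\}$, the $\mC_q$-block of $\ol(t)-\ind$ equals $\errc(t)$ with $\norm{\errc(t)}\le \norm{\overline{W}_{\mC_q}^{\,t-T_f}}\,\norm{\errc(T_f)}$ by~\eqref{eq:delta_norm} (and is $0$ when $\mC_q=\emptyset$) while the $\mD_q$-block vanishes by Lemma~\ref{lemma:t_geq_tf}; Theorem~\ref{thm:W_C_weakly_chained} makes this vanish pathwise as $t\to\infty$, and the uniform bound $|\LL|^{r/2}$ lets us pass the limit through the expectation. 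I do not anticipate a real obstacle: the only point requiring care is that the dominating bound be deterministic and independent of $t$, which is guaranteed precisely because the protocol confines all opinions to $[0,1]$; the randomness of $T_f$ and of the spectral decay rate of $\overline{W}_{\mC_q}$ is irrelevant once almost sure convergence plus a uniform bound are available.
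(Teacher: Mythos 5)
Your proposal is correct and follows essentially the same route as the paper: a deterministic uniform bound $\norm{\ol(t)-\ind}^r\le(\sqrt{|\LL|})^r$ coming from the fact that all opinions lie in $[0,1]$, combined with the almost sure convergence of Theorem~\ref{thm:as_conv} and the dominated convergence theorem. The additional self-contained decomposition you sketch is not needed, but it is consistent with the paper's argument.
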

\begin{proof}
     Since $\ol(t) \in [0,1]^{|\LL|}$, $\norm{\ol(t)}_2\leq \sqrt{|\LL|}$. Also we have $\ind$ in $[0,1]^{\LL}$. Then, using the triangle inequality we get $\norm{\ol(t)-\ind}^r\leq (\sqrt{|\LL|})^r < \infty$.
     We can apply the dominated convergence theorem \cite{cinlarErhan2011PaS} to conclude our proof since $\ol(t)$ converges to $\ind$ almost surely by Theorem \ref{thm:as_conv}.
\end{proof}
Finally, the following Corollary shows that following this protocol, every legitimate agent can learn the trustworthiness of all agents in the network, including their in- and out-neighbors, $\Nin$, $\Nout$, for all $i\in\LL$.
\begin{corollary}[Learning the Trustworthiness of All Agents] \label{cor:learning_all_agents}
All legitimate agents $i\in \LL$ can learn the trustworthiness of all agents in the network correctly. That is, there exists a finite time $T_{max}$ such that for all $t\geq T_{max}$ and for all $q\in \V$, $o_{iq}(t) \geq 1/2$ if $q\in \LL$ and $o_{iq}(t) < 1/2$ if $q\in \MM$ almost surely.
\end{corollary}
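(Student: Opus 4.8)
The plan is to obtain this corollary directly from Theorem~\ref{thm:as_conv}, by turning almost-sure convergence of the opinion vectors into eventual correct classification and then making the time uniform via a union over the finite index set of agent pairs. Fix a legitimate agent $i\in\LL$ and an arbitrary agent $q\in\V$. Theorem~\ref{thm:as_conv} states that $\ol(t)\to\ind$ almost surely; reading off the coordinate of this $|\LL|\times 1$ vector corresponding to agent $i$ shows that $o_{iq}(t)$ converges almost surely to the scalar $\ind$, i.e. to $1$ if $q\in\LL$ and to $0$ if $q\in\MM$. (For $q\in\Nin$ this is also immediate from Corollary~\ref{cor:learning_in_neighbors}, since then $o_{iq}(t)\in\{0,1\}$ takes the correct value for all $t\ge T_f$.)

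The key observation is that in every case the almost-sure limit of $o_{iq}(t)$ lies in the two-point set $\{0,1\}$, not merely somewhere in $[0,1]$. Hence, on the probability-one event where $o_{iq}(t)\to\ind$, there is a finite (random) time $T_{iq}$ such that $|o_{iq}(t)-\ind|<1/2$ for all $t\ge T_{iq}$; this is exactly the statement that $o_{iq}(t)\ge 1/2$ when $q\in\LL$ and $o_{iq}(t)<1/2$ when $q\in\MM$, for all $t\ge T_{iq}$.

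It remains to make the time uniform over all agent pairs. Since $\V$ is finite, there are only finitely many pairs $(i,q)$ with $i\in\LL$ and $q\in\V$, so we may set $T_{max}\triangleq\max_{i\in\LL,\,q\in\V}T_{iq}$. Being the maximum of finitely many almost-surely finite random variables, $T_{max}$ is almost surely finite, and the event that all the corresponding convergence statements hold simultaneously is a finite intersection of probability-one events, hence itself has probability one. On that event, for every $t\ge T_{max}$, every $i\in\LL$, and every $q\in\V$ we have $o_{iq}(t)\ge 1/2$ if and only if $q\in\LL$, which is the claim; in particular, for each $i\in\LL$ the sets $\Nin(t)$ and $\Nout(t)$ then coincide with the legitimate in- and out-neighbors of $i$.

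I do not expect a substantive obstacle: the corollary is essentially Theorem~\ref{thm:as_conv} restated at the level of individual coordinates. The only two points that need a moment of care are (i) that the limit takes values in the discrete set $\{0,1\}$, which is what upgrades ``convergence'' to ``eventually on the correct side of the threshold $1/2$'', and (ii) that passing from per-pair random times to a single uniform random time $T_{max}$ is legitimate precisely because $|\V|<\infty$, so only finitely many probability-one events are being intersected; both are routine.
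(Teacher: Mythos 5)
Your proposal is correct and follows essentially the same route as the paper: invoke Theorem~\ref{thm:as_conv}, use the fact that the almost-sure limit lies in $\{0,1\}$ so that convergence forces the opinion to eventually land on the correct side of the threshold $1/2$, and then take the maximum of the resulting finite collection of almost-surely finite times. Your version is slightly more careful than the paper's (which only indexes the times by $q$) in making the union over all pairs $(i,q)$ and the finite intersection of probability-one events explicit, but there is no substantive difference.
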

\begin{proof}
     Let $i$ be a legitimate agent. Let $q$ be an arbitrary agent in the system. Then, by Theorem \ref{thm:as_conv}, there exist a time $T_q$ almost surely such that for all $t\geq T_q$, $o_{iq}(t)>1/2$ if $q\in\LL$ and $o_{iq}(t)<1/2$ otherwise. Then we can choose $T_{max} = \max_{q \in \V} T_q$.
\end{proof}

\subsection{Finite-Time Analysis}
In this part, we analyze the finite-time behavior of our algorithm. The algorithm behaves in a deterministic way once we reach $T_f$, the time at which all legitimate agents learn the trustworthiness of their in-neighbors. Therefore, our first goal is to characterize the time it takes to learn all the agents after reaching $T_f$, i.e., the time interval between $T_f$ and $T_{\max}$ defined in \Cref{cor:learning_all_agents}. Using this, we derive upper bounds for the probabilities of reaching $T_{\max}$ at any time step $t$ and not reaching it before $t$. 

The characterization of $T_f$ follows from the previous work \cite{yemini2022resilience}. Let $D_{\mL}$ be the total number of legitimate in-neighbors in the system, i.e., $D_{\mL} \triangleq \sum_{i\in \mL} |\Nin \cap \mL|.$ Similarly, let $D_{\mM}$ be the total number of malicious in-neighbors in the system, i.e., $D_{\mM} \triangleq \sum_{i\in \mM} |\Nin \cap \mM|.$ We have the following result that concerning the probability of reaching $T_f$ at some time $k.$
\begin{lemma}[Lemma 2, \cite{yemini2022resilience}]
    \label{lemma:Tf_prob}
    Let $i \in \mL$ be a legitimate agent. We define $E_{\mL} \triangleq \mathbb{E}[\aij(t)]-1/2$ for $j \in \mL$ and $E_{\mM} \triangleq \mathbb{E}[\aij(t)]-1/2$ for $j \in \mM$. Define 
    $$p_c(t) \triangleq D_{\mL}\exp(-2tE_{\mL}^2)+D_{\mL}\exp(-2tE_{\mM}^2),$$
    %Also, define 
    $$p_e(t) \triangleq D_{\mL}\frac{\exp(-2tE_{\mL}^2)}{1-\exp(-2E_{\mL}^2)}+D_{\mM}\frac{\exp(-2tE_{\mM}^2)}{1-\exp(-2E_{\mM}^2)}.$$
    Then, we have the following upper bounds for all $t\geq 0$,
    \begin{align}
        \Pr(T_f=t) &\leq \min \{p_c(t-1),1\}, \\
        \Pr(T_f>t-1) &\leq \min \{p_e(t-1),1\}.
    \end{align}
\end{lemma}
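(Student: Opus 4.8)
The plan is to reduce both inequalities to Hoeffding's inequality for the partial sums $\beta_{ij}(s)=\sum_{k=0}^{s}(\alpha_{ij}(k)-1/2)$, followed by union bounds. First I would record the observation that makes $T_f$ tractable: by the way $T_f$ is defined in \Cref{lemma:learning_in_neighbors} (the smallest time from which all in-neighbor classifications are permanently correct), the event $\{T_f=t\}$ is \emph{contained in} the event that at time $t-1$ some legitimate agent misclassifies some in-neighbor, i.e. that there exist $i\in\LL$ and $j\in\Nin$ with either $j\in\LL$ and $\beta_{ij}(t-1)<0$, or $j\in\MM$ and $\beta_{ij}(t-1)\geq 0$. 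We only need this necessary condition; the extra requirement that no misclassification occurs at any later time is simply discarded. Likewise, $\{T_f>t-1\}$ is contained in the event that such a misclassification occurs at \emph{some} time $s\geq t-1$.

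Next I would bound the probability of a single misclassification at a single time. Fix a legitimate pair $i\in\LL$, $j\in\Nin\cap\LL$. By \Cref{assumption:trust-observations}, the summands $\alpha_{ij}(k)-1/2$ are independent, lie in $[-1/2,1/2]$, and have common mean $E_{\mathcal{L}}>0$; hence $\beta_{ij}(s)$ is a sum of $s+1$ independent bounded variables with mean $(s+1)E_{\mathcal{L}}>0$, and Hoeffding's inequality gives $\Pr(\beta_{ij}(s)<0)\leq \exp(-2(s+1)E_{\mathcal{L}}^2)\leq \exp(-2sE_{\mathcal{L}}^2)$. Symmetrically, for a malicious pair $i\in\LL$, $j\in\Nin\cap\MM$, the summands have mean $E_{\mathcal{M}}<0$, so $\Pr(\beta_{ij}(s)\geq 0)\leq \exp(-2(s+1)E_{\mathcal{M}}^2)\leq \exp(-2sE_{\mathcal{M}}^2)$, using $E_{\mathcal{M}}^2=|E_{\mathcal{M}}|^2$.

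For the first bound, I would union-bound over all legitimate in-neighbor pairs (there are $D_{\mathcal{L}}$ of them) and all malicious in-neighbor pairs (there are $D_{\mathcal{M}}$ of them) at the single time $s=t-1$, obtaining $\Pr(T_f=t)\leq D_{\mathcal{L}}\exp(-2(t-1)E_{\mathcal{L}}^2)+D_{\mathcal{M}}\exp(-2(t-1)E_{\mathcal{M}}^2)=p_c(t-1)$; intersecting with the trivial bound $\Pr(\cdot)\leq 1$ gives $\min\{p_c(t-1),1\}$ (this also covers the degenerate case $t=0$, where the right-hand side already exceeds $1$). For the second bound I would additionally union-bound over all times $s\geq t-1$ and sum the geometric tails: $\sum_{s\geq t-1}\exp(-2sE_{\mathcal{L}}^2)=\exp(-2(t-1)E_{\mathcal{L}}^2)/(1-\exp(-2E_{\mathcal{L}}^2))$, and similarly with $E_{\mathcal{M}}$; this yields exactly $p_e(t-1)$, and again we intersect with $1$.

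The only genuinely delicate points are bookkeeping rather than mathematics: (i) translating the order statistic $T_f$ into sign events for the partial sums in a way that uses only the ``$\leq$'' direction, and (ii) handling the countably infinite union in the second inequality, whose convergence rests on $E_{\mathcal{L}},E_{\mathcal{M}}\neq 0$ so that the tail is a summable geometric series — which is precisely where the homogeneity hypothesis $E_{\mathcal{M}}<0<E_{\mathcal{L}}$ of \Cref{assumption:trust-observations} enters. Everything else is a direct application of Hoeffding's inequality together with the independence in \Cref{assumption:trust-observations}.2. Since the statement is quoted verbatim from \cite[Lemma 2]{yemini2022resilience}, one may alternatively cite it directly; the above is the argument underlying it.
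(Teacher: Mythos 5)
Your proposal is correct, and it is worth noting that the paper itself gives no proof of this statement at all: the lemma is imported verbatim by citation from the reference, so there is nothing internal to compare against. Your reconstruction --- reduce $\{T_f=t\}$ to the necessary event that some pair $(i,j)$ with $i\in\LL$, $j\in\Nin$ is misclassified at time $t-1$, bound each single-pair, single-time misclassification by Hoeffding's inequality applied to the $s+1$ independent summands of $\bij(s)$ (each in $[-1/2,1/2]$ with mean $E_{\mL}$ or $E_{\mM}$), then union-bound over pairs for $p_c$ and additionally over times $s\ge t-1$ with a geometric tail sum for $p_e$ --- is exactly the standard argument underlying the cited result, and all the delicate points you flag (taking $T_f$ to be the \emph{minimal} time in \Cref{lemma:learning_in_neighbors}, only needing the one-sided event inclusion, summability of the tail from $E_{\mL},E_{\mM}\neq 0$ in \Cref{assumption:trust-observations}) are handled correctly. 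One small discrepancy you should make explicit rather than paper over: your union bound naturally produces $D_{\mL}\exp(-2(t-1)E_{\mL}^2)+D_{\mM}\exp(-2(t-1)E_{\mM}^2)$, whereas the statement's definition of $p_c$ has $D_{\mL}$ in \emph{both} terms (and, relatedly, the paper defines $D_{\mM}$ as a sum over $i\in\MM$ rather than over legitimate observers); these are almost certainly typographical slips in the statement, and your derived bound is the consistent one --- it matches the structure of $p_e$, where $D_{\mM}$ does appear in the second term --- but as written your "$=p_c(t-1)$" does not literally agree with the displayed $p_c$.
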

 This lemma shows that probability of reaching $T_f$,  decreases exponentially with $t$ and is upper bounded by $p_c(t)$. Similarly, the probability of not reaching $T_f$ before some time $t$ also decreases exponentially with $t$ and is upper bounded by $p_e(t)$. We will use these upper bounds to derive similar relationships for reaching $T_{\max}$. Before doing so, we need to establish the relationship between $T_{\max}$ to $T_f$. With the next result, we show a bound on the time it takes to reach $T_{\max}$ after reaching $T_f.$ The bound is derived by exploiting the equivalence between opinion dynamics in our system and absorbing Markov chains.
 \begin{lemma}
 \label{lemma:Tf_Tmax_relation}
     Let $\deg_{\max} \triangleq~\max_{i\in \mL} |\Nin|$ be the maximum in-degree of any legitimate agent in the graph $\mG$, and let $l_{\mG}$ denote the diameter of the graph $\mG.$  Then, we have
     \begin{align}
         T_{\max}-T_f \leq h \cdot l_{\mG},
     \end{align}
     where $h \triangleq 1/\log_2 \frac{1}{1-(1/\deg_{\max})^{l_{\mG}}}.$
 \end{lemma}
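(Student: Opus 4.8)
The plan is to exploit that, once the random time $T_f$ is reached, the dynamics become deterministic and the error satisfies $\errc(t)=\overline{W}_{\mC_q}^{\,t-T_f}\errc(T_f)$ as in~\eqref{eq:delta_power}, with $\norm{\errc(T_f)}_\infty\le 1$ because all opinions and their targets lie in $[0,1]$. For any $q$ with $\mC_q=\emptyset$ the legitimate agents have already learned $q$ by time $T_f$ (Lemma~\ref{lemma:t_geq_tf}), so it suffices to bound, uniformly over all $q$ with $\mC_q\neq\emptyset$, the first time $t$ at which $\norm{\overline{W}_{\mC_q}^{\,t-T_f}\mathbf 1}_\infty<1/2$; since $\norm{\errc(t)}_\infty\le\norm{\overline{W}_{\mC_q}^{\,t-T_f}\mathbf 1}_\infty$, such a $t$ certifies that every $o_{iq}(t)$ lies on the correct side of $1/2$ and stays there, i.e.\ $T_q\le t$ in the notation of the proof of Corollary~\ref{cor:learning_all_agents}. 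I would interpret $\overline{W}_{\mC_q}$ as the sub-transition matrix of an absorbing Markov chain with transient states $\mC_q$ and absorbing states $\mD_q$ — the row mass ``leaked'' from $\overline{W}_{\mC_q}$, which equals the row mass of $\overline{W}_{\mD_q}$ since the malicious contribution vanishes for $t\ge T_f$ (Lemma~\ref{lemma:t_geq_tf}). Then $(\overline{W}_{\mC_q}^{\,m}\mathbf 1)_i$ is exactly the probability that, started at $i\in\mC_q$, the chain is still un-absorbed after $m$ steps, and this is non-increasing in $m$.

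The core step is a uniform lower bound on the probability of absorption within $l_{\mG}$ steps from any transient state. Fix $i\in\mC_q$. Reusing the argument from the proof of Theorem~\ref{thm:W_C_weakly_chained}, Assumption~\ref{assumption:graph-connectivity} gives a path $d=i_0\to i_1\to\cdots\to i_l\to i$ in $\GL$ from some directly observing agent $d\in\mD_q$ to $i$; taking it to be shortest makes its length at most $l_{\mG}$. Choosing the largest index $j$ with $i_j\in\mD_q$, Corollary~\ref{cor:path-inversion-in-matrix} produces the reversed path $i\to i_l\to\cdots\to i_{j+1}$ in $\GW{\overline{W}_{\mC_q}}$, of length at most $l_{\mG}-1$, whose endpoint $i_{j+1}$ has row sum strictly below $1$ because it still receives from $i_j\notin\mC_q$. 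Appending the escape transition $i_{j+1}\to i_j$ yields a walk of length at most $l_{\mG}$ along which the chain exits $\mC_q$; each of its edges is a positive entry of $\overline{W}_{\mC_q}$ or $\overline{W}_{\mD_q}$, and every such entry is the reciprocal of a legitimate in-degree, hence at least $1/\deg_{\max}$. Therefore, from \emph{any} $i\in\mC_q$, the probability of being absorbed within $l_{\mG}$ steps is at least $(1/\deg_{\max})^{l_{\mG}}$.

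Setting $\rho\triangleq 1-(1/\deg_{\max})^{l_{\mG}}$, the Markov property over consecutive blocks of $l_{\mG}$ steps gives $(\overline{W}_{\mC_q}^{\,k l_{\mG}}\mathbf 1)_i\le\rho^{k}$ for all $k$, so by monotonicity $\norm{\errc(t)}_\infty\le\rho^{\lfloor(t-T_f)/l_{\mG}\rfloor}$ for every $t\ge T_f$; crucially $\rho$ depends only on the global quantities $\deg_{\max}$ and $l_{\mG}$, hence is the same for all $q$. Imposing $\rho^{\lfloor(t-T_f)/l_{\mG}\rfloor}<1/2$, equivalently $\lfloor(t-T_f)/l_{\mG}\rfloor>1/\log_2(1/\rho)=h$, is guaranteed once $t-T_f\ge h\,l_{\mG}$ (with the integer rounding absorbed into the definition of $h$); combining this with the trivial $\mC_q=\emptyset$ case and taking $T_{\max}=\max_{q\in\V}T_q$ gives $T_{\max}-T_f\le h\,l_{\mG}$.

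I expect the main obstacle to be the second step: proving the escape-probability bound $(1/\deg_{\max})^{l_{\mG}}$ \emph{uniformly} over all $i\in\mC_q$ and all $q\in\V$. This requires correctly bounding, via the diameter, the length of a path inside $\GL$ from a directly observing agent to an arbitrary $i\in\mC_q$; checking that the reversed path plus one escape transition has length at most $l_{\mG}$ and that every entry it traverses is at least $1/\deg_{\max}$; and confirming the bound does not degrade as $q$ varies, so a single contraction factor $\rho$ governs $T_{\max}-T_f$ for all agents at once. The remaining work — passing from ``un-absorbed after $m$ steps'' to ``$\norm{\errc(t)}_\infty<1/2$'' through sub-stochastic matrix powers, and the floor/ceiling bookkeeping in the final inequality — is routine.
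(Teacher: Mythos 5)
Your proposal is correct and follows essentially the same route as the paper: both bound $\infnorm{\errq(t)}$ via powers of $\overline{W}_{\mC_q}$, interpret $\overline{W}_{\mC_q}$ as the transient block of an absorbing Markov chain, and use the reversed path construction from Theorem~\ref{thm:W_C_weakly_chained} with per-edge probability at least $1/\deg_{\max}$ to force the infinity norm below $1/2$ within $h\cdot l_{\mG}$ steps after $T_f$. The only cosmetic differences are that you use blocks of length $l_{\mG}$ directly (the paper first uses blocks of length $\widehat{con}\overline{W}_{\mC_q}+1$ and then bounds that by the diameter, with an explicit auxiliary absorbing state), and your floor-rounding looseness in the final step mirrors an analogous looseness in the paper's own choice of $h$.
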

 \begin{proof}
     If all agents are directly observing each other, then we have $T_{\max}=T_f$ trivially. Now, let $q\in \mV$ be an arbitrary agent in the system such that the set of legitimate agents not observing $q$ is non-empty, i.e.,  $\mC_q \neq \emptyset.$ Define the error in agents' opinions as $\errq(t)=\ol(t)-\ind.$ Notice that if the error in one agent's opinion is less than $1/2$, then, that agent classifies $q$ correctly. Therefore, if we have $\infnorm{\errq(t)}<1/2$, all agents classify $q$ correctly at time $t$. By \Cref{eq:delta_power} and properties of the matrix norm, we have
     \begin{align}
     \label{eq:err_vector_finite_bound}
         \infnorm{\errq(t)} &\leq \infnorm{\overline{W}_{\mC_q}^{t-T_f}} \infnorm{\errq(T_f)} \\
                            &\leq \infnorm{\overline{W}_{\mC_q}^{t-T_f}},
     \end{align}
    where the last step follows from the fact that the opinions can only take values in range $[0,1].$ Now, if we find a $t'$ such that $\infnorm{\overline{W}_{\mC_q}^{t'-T_f}}<1/2$, then, all the legitimate agents classify $q$ correctly after reaching $t'.$ To find $t'$, we define a Markov chain with the transition matrix $Q$. The states in the chain correspond to the agents in $\mC_q$ and the transition probability from agent $i \in \mC_q$ to $j \in \mC_q$ denoted by $[Q]_{ij}$ is given by $[Q]_{ij}= [\overline{W}_{\mC_q}]_{ij}$. This results in a graph where the corresponding edges in $\mG$ are flipped. 
    However, recall that $\overline{W}_{\mC_q}$ is a substochastic matrix. Therefore, we define an auxiliary state $a$ with transition probabilities $[Q]_{ia}=1-\sum_{j\in \mC_q} [\overline{W}_{\mC_q}]_{ij}$ and $[Q]_{aa} = 1$. Let $v$ be an $|\mC_q|\times 1$ vector with $v_i= [Q]_{ia}.$ Then, we have
    $$
        Q=\begin{bmatrix}\overline{W}_{\mC_q} & v \\
                         \boldsymbol{0}   & 1\end{bmatrix}, \text{ and } Q^t=\begin{bmatrix}\overline{W}_{\mC_q}^t & (\sum_{s=0}^{t-1} \overline{W}_{\mC_q}^s) v \\
                         \boldsymbol{0}   & 1\end{bmatrix}.
    $$
    where $\boldsymbol{0}$ is an $|\mC_q|\times 1$ vector of zeros. Note that since $\overline{W}_{\mC_q}$ is convergent, this transition matrix defines an absorbing Markov chain \cite{seneta2006non}. Notice that $\infnorm{\overline{W}_{\mC_q}^t} = 1 - \min_{i\in \mC_q} [(\sum_{s=0}^{t-1} \overline{W}_{\mC_q}^s) v]_i.$ The term $[(\sum_{s=0}^{t-1} \overline{W}_{\mC_q}^s) v]_i$ corresponds to the probability of going from the state $i$ to $a$ in $t$ steps. The minimum non-zero transition probability between any two agents is given by $1/\deg_{\max}$. Moreover, we know that there is a path of length at most $\widehat{con}\overline{W}_{\mC_q}$ from an arbitrary agent $i\in \mC_q$ to an agent $j$ with $\sum_{l\in \mC_q} [\overline{W}_{\mC_q}]_{jl}<1$ by \Cref{thm:W_C_weakly_chained}. By construction, this agent $j$ has an edge to agent $a$ with transition probability at least $1/\deg_{\max}.$ Therefore, for $t = \widehat{con}\overline{W}_{\mC_q}+1$, we have $[Q^{t}]_{ia}\geq (1/\deg_{\max})^{\widehat{con}\overline{W}_{\mC_q}+1}$ for any $i\in\mC_q.$ 
    
    Hence, we have
    \begin{align*}    \infnorm{\overline{W}_{\mC_q}^{\widehat{con}\overline{W}_{\mC_q}+1}} &\leq1- (1/\deg_{\max})^{\widehat{con}\overline{W}_{\mC_q}+1}, \\
        \infnorm{\overline{W}_{\mC_q}^{h_q(\widehat{con}\overline{W}_{\mC_q}+1)}} &\leq \infnorm{\overline{W}_{\mC_q}^{\widehat{con}\overline{W}_{\mC_q}+1}}^{h_q}\\
        &\leq (1-(1/\deg_{\max})^{(\widehat{con}\overline{W}_{\mC_q}+1)})^{h_{q}}. \\
    \end{align*}
    Then, for $h_q>1/\log_2 \frac{1}{1-(1/\deg_{\max})^{\widehat{con}\overline{W}_{\mC_q}+1}}$ we have $\infnorm{\overline{W}_{\mC_q}^{h_q(\widehat{con}\overline{W}_{\mC_q}+1)}}<1/2$. So, after $t'=h_q(\widehat{con}\overline{W}_{\mC_q}+1)+T_f$, all the legitimate agents classify $q$ correctly. We have $\max_{q\in \mV}\widehat{con}\overline{W}_{\mC_q}+1 \leq l_{\mG}$ where $l_{\mG}$ is the diameter of the graph. Let $h$ be $\max_{q\in \mV} h_q.$ Then, for all $t'\geq h\cdot l_{\mG}+T_f$ and for any $q\in \mV$, we have $\infnorm{\overline{W}_{\mC_q}^{t'}} < 1/2.$ Therefore, by the definition of $T_{\max}$ we get $T_{\max}-T_f \leq h\cdot l_{\mG}.$ 
 \end{proof}
 
 The following proposition gives us the desired probabilistic characterization of $T_{\max}.$
 \begin{proposition}
     Define $\Delta = h \cdot l_{\mG}+1$, where $h = 1/\log_2 \frac{1}{1-(1/\deg_{\max})^{l_{\mG}}}.$ Then, we have for all $t\geq 0,$
     \begin{align}
        \Pr(T_{\max}=t) &\leq \min \{p_c(t-\Delta),1\}, \text{ and,} \\
        \Pr(T_{\max}>t-1) &\leq \min \{p_e(t-\Delta),1\}.
    \end{align}
 \end{proposition}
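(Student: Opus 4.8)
The plan is to derive both inequalities from the probabilistic characterization of $T_f$ in Lemma \ref{lemma:Tf_prob} together with the deterministic delay bound of Lemma \ref{lemma:Tf_Tmax_relation}. The point is that all the randomness in the protocol is concentrated before time $T_f$: once the legitimate agents have learned their in-neighbours the dynamics are deterministic, and Lemma \ref{lemma:Tf_Tmax_relation} bounds the remaining time to reach $T_{\max}$ by $h\cdot l_{\mG}=\Delta-1$ \emph{uniformly over the random state at $T_f$}. Concretely, $T_{\max}\le T_f+\Delta-1$, and the natural object to bound is the guaranteed completion time $\widetilde T:=T_f+\Delta-1$, which upper bounds $T_{\max}$ and is an affine function of $T_f$; both claimed bounds will then follow from Lemma \ref{lemma:Tf_prob} by an affine change of the time index.

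First I would prove the tail bound, which is the clean direction. Since $T_{\max}\le T_f+\Delta-1$, the event $\{T_{\max}>t-1\}$ is contained in $\{T_f+\Delta-1>t-1\}=\{T_f>t-\Delta\}$. Applying the second inequality of Lemma \ref{lemma:Tf_prob} with the time index $t-\Delta+1$ substituted for $t$ (so that $t-\Delta$ plays the role there of ``$t-1$'') gives $\Pr(T_f>t-\Delta)\le\min\{p_e(t-\Delta),1\}$, which is exactly the asserted bound on $\Pr(T_{\max}>t-1)$. Here no information about the internal state at $T_f$ is used beyond the worst-case delay.

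For the hitting bound I would argue in terms of $\widetilde T=T_f+\Delta-1$, the time by which learning is certainly complete: the event $\{\widetilde T=t\}$ is exactly $\{T_f=t-\Delta+1\}$, so the first inequality of Lemma \ref{lemma:Tf_prob}, again with $t-\Delta+1$ in place of $t$, yields $\Pr(\widetilde T=t)=\Pr(T_f=t-\Delta+1)\le\min\{p_c(t-\Delta),1\}$. The main thing to be careful about — and really the only delicate point — is justifying that $T_{\max}$ may be replaced here by the worst-case value $T_f+\Delta-1$: for the true (minimal) $T_{\max}$ the event $\{T_{\max}=t\}$ only forces $T_f$ into the window $\{t-\Delta+1,\dots,t\}$, and a crude union bound over that window would lose a factor proportional to $\Delta$ (or produce a $p_e$-type tail sum); obtaining the single exponential term $p_c(t-\Delta)$ relies on the fact that the delay bound of Lemma \ref{lemma:Tf_Tmax_relation} is exactly $\Delta-1$ and is what we mean by ``reaching $T_{\max}$''. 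Beyond this identification the proof is just index bookkeeping: matching ``$T_{\max}$ reached at $t$'' and ``$T_{\max}$ not reached before $t$'' to the two inequalities of Lemma \ref{lemma:Tf_prob} under the shift $\Delta=h\cdot l_{\mG}+1$, and carrying the $\min\{\cdot,1\}$ truncation through; no new probabilistic estimates are needed.
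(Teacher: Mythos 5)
Your approach is essentially the paper's: both proofs combine the deterministic delay bound of \Cref{lemma:Tf_Tmax_relation} with the probabilistic bounds on $T_f$ from \Cref{lemma:Tf_prob} and shift the time index by $\Delta$. Your derivation of the tail bound via the inclusion $\{T_{\max}>t-1\}\subseteq\{T_f+\Delta-1>t-1\}=\{T_f>t-\Delta\}$ is exactly what the paper's ``follows similarly'' amounts to, and it is correct. For the point-mass bound, the paper defines the \emph{random} quantity $\Delta_{\text{diff}}=T_{\max}-T_f+1\leq\Delta$, writes $\Pr(T_{\max}=t)=\Pr(T_f=t-\Delta_{\text{diff}})$, applies \Cref{lemma:Tf_prob}, and then replaces $\Delta_{\text{diff}}$ by $\Delta$ using the monotonicity of $p_c$; in other words, it performs precisely the substitution you single out as the delicate point, treating the random shift as a deterministic worst case. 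Your reformulation through the surrogate $\widetilde T=T_f+\Delta-1$ is a faithful (arguably cleaner) rendering of that step, and the residual concern you raise is genuine but is not resolved by the paper either: since $\{T_{\max}=t\}$ only localizes $T_f$ to the window $\{t-\Delta+1,\dots,t\}$, a fully rigorous argument along these lines gives $\min\{\Delta\,p_c(t-\Delta),1\}$ (union bound) or $\min\{p_e(t-\Delta),1\}$ (tail bound), not the single term $p_c(t-\Delta)$, unless one interprets the event as ``$T_f$ occurred $\Delta-1$ steps earlier,'' i.e., reads $T_{\max}$ through the guaranteed completion time $T_f+\Delta-1$ as you do. So you have not missed an ingredient that the paper supplies; you have pinpointed the same looseness that its own proof glosses over, and your argument otherwise matches it step for step.
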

\begin{proof}
    Define a variable $\Delta_{\text{diff}} \triangleq T_{\max}-T_f+1$. Notice that we have $\Delta_{\text{diff}} \leq \Delta = h \cdot l_{\mG}+1.$ Using \Cref{lemma:Tf_prob}, we get
    \begin{align*}
        \Pr(T_{\max}=t) &= \Pr(T_f=t-\Delta_{\text{diff}})\\
                    &\leq \min \{p_c(t-\Delta_{\text{diff}}),1\} \\
                    &\leq \min \{p_c(t-\Delta),1\},
    \end{align*}
    where in the last step we used the fact that $p_c$ is monotonically decreasing. The proof for $\Pr(T_{\max}>t-1)$ follows similarly.
\end{proof}

This proposition demonstrates that for sufficiently large $t$, the probability of reaching $T_{\max}$ at time $t$ and not reaching it before $t$ both decreases exponentially similar to $T_f$ (\Cref{lemma:Tf_prob}). Additionally, the upper bound on the time required for this exponential decrease is determined by the diameter and maximum degree of the communication graph. This proposition can be particularly useful for other algorithms that leverage our learning protocol to design resilient systems. For instance, given a desired level of confidence, one can determine a finite-time cut-off limit for running this learning protocol before starting another algorithm using only resulting trustworthy agents.

\section{Numerical Studies}
In this section, we evaluate the performance of the algorithm via numerical studies. We show that all legitimate agents can learn the the trustworthiness of all the other agents in the system using our algorithm in various network realizations, which supports our theoretical results. 

\textbf{Communication graph:} We generate the graph of legitimate agents, denoted with $\GL$, in two different ways to show that the protocol works with different graph structures. The first way is to use a cyclic graph. We choose this graph model because it is strongly connected by default and the contraction index for learning the trustworthiness of any legitimate agent in the system is $|\LL|-2$, which grows linearly with the number of legitimate agents. The second way is to generate a random graph using Erdős–Rényi model where each edge in the graph is either included or not with probability $p$ \cite{erdHos1960evolution}. We choose $p=\frac{2\log{|\LL|}}{|\LL|}$ to have a high probability of generating a strongly connected graph \cite{graham2008note}, and in the case where the generated graph is not strongly connected, we repeat the process to ensure satisfaction of Assumption \ref{assumption:graph-connectivity}.1. The graphs generated using this model is likely to have a better connectivity and a lower contraction index for learning any legitimate agent compared to the cyclic graphs. In this way the cyclic graph represents the most difficult case where legitimate information takes longest to circulate throughout the network. After generating the graph of legitimate agents, we randomly add malicious agents to the system.

\textbf{Malicious agents:} We assume that all the malicious agents in the system are omnipotent in that they know the trustworthiness of every other agent in the system, this represents a strong attack. The malicious agents do not follow the update rules and they always send the opposite of the true trustworthiness information to other agents, i.e. they assign $1$ to all malicious agents and assign $0$ to all legitimate agents in the trust vector they share. Since the malicious agents do not follow the learning protocol, we do not explicitly model the communication between the malicious agents.

\textbf{Trust observations:} Following the previous work \cite{ourTRO}, we model the trust observations $\aij(t)$ as follows: At each time step $t$ we sample $\aij(t)$ uniformly from the interval $[0.35, 0.75]$ if $j\in \LL$ and from $[0.25, 0.65]$ if $j\in \MM$. This way, $\mathbb{E}[\aij(t)]=0.55$ if $j$ is a legitimate agent and $\mathbb{E}[\aij(t)]=0.45$ otherwise. With this setup, Assumption \ref{assumption:trust-observations} is satisfied.

\textbf{Metrics:} We evaluate the model performance based on two different metrics. The first one is mean squared error (MSE) where we calculate the mean squared error between the true trust vector and trust vector of legitimate agents and take the average across all legitimate agents. Following from Theorem \ref{thm:as_conv}, the MSE should converge to $0$. The second metric we define is $\hat{T}_{max}$, which is a proxy for $T_{max}$ defined in Corollary \ref{cor:learning_all_agents}. If all legitimate agents classify every other agent correctly for $N$ number of time steps after time $t$, where $N$ is the total number of agents in the system, we assign $\hat{T}_{max}=t$ and stop the experiment, assuming that no further classification error would occur since $N$ is large enough for information to propagate through the whole network.

\subsection{Results}
Here, we present the results for three different setups with $|\LL|\in \{20,40,80\}$ and $|\MM|= 1.5 \times |\LL|$. For each $|\LL|$, we generate the network of legitimate agents in two different ways: using a cyclic graph, and an Erdős–Rényi graph over legitimate agents. Then we add the malicious agents randomly, and we track the MSE for both networks. Examples of these graph topologies can be seen in Figure \ref{fig:topology_examples} . The results are presented in Figure \ref{fig:num_res_three_setups}. It can be seen that both MSE and maximum error converges to $0$ in all setups.
\begin{figure}[t]
    \centering
    \subfigure[Cyclic $\GL$]{\includegraphics[width=0.33\textwidth]{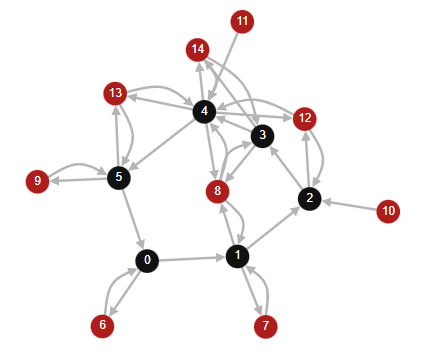}}
    \hspace{15mm}
    \subfigure[Erdős–Rényi $\GL$]{\includegraphics[width=0.33\textwidth]{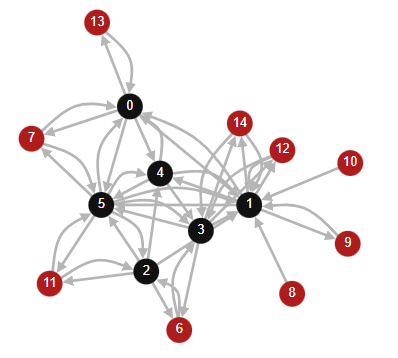}}
    \caption{Example graph topologies with $|\LL|=6$, $|\MM|=9$ nodes.}
    \label{fig:topology_examples}
\end{figure}

\begin{figure}
    \centering
    \subfigure[$|\LL|=20$, $|\MM|=30$]{\includegraphics[width=0.32\textwidth]{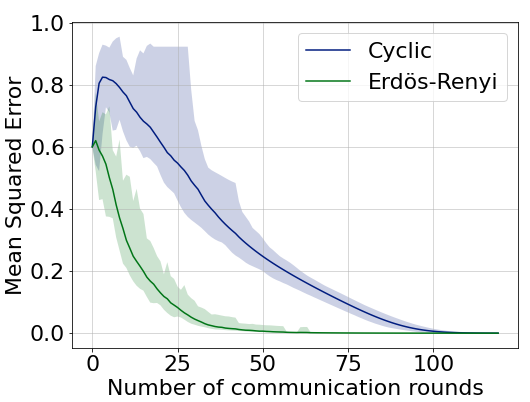}}
    \subfigure[$|\LL|=40$, $|\MM|=60$]{\includegraphics[width=0.32\textwidth]{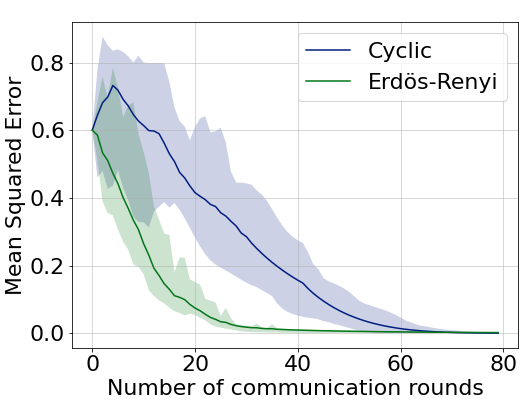}}
    \subfigure[$|\LL|=80$, $|\MM|=120$]{\includegraphics[width=0.32\textwidth]{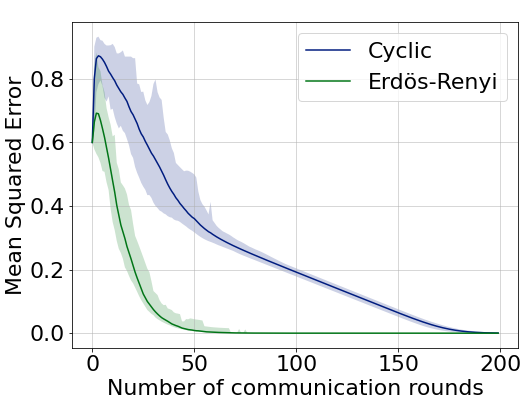}}
    \caption{Convergence plots for three different cases where the number of malicious agents are chosen as $|\MM|=1.5\times|\LL|$. Solid lines represents the MSE. The shaded areas show the range of error among the legitimate agents. We see that in all cases, the MSE converges to zero eventually as predicted by our theory. Since the malicious agents can have influence on the other nodes in the beginning, we observe an increase in the error at first. This effect is higher in cyclic graphs since the information takes longer to propagate. Moreover, as we increase the size of the graph for cyclic graphs, the convergence time also increases. On the other hand, since Erdős–Rényi graph has good connectivity in all cases, the convergence time is not as sensitive to the graph size compared to the cyclic graphs.}
    \label{fig:num_res_three_setups}
\end{figure}

For each setup, we present the maximum contraction index, denoted by $\widehat{con}_{max}$  and $\hat{T}_{max}$ in Table \ref{table:num_res_three_setups}. We define the maximum contraction index as $\widehat{con}_{max}=\max_{q\in \V}\widehat{con}\overline{W}_{\mC_q}$, where $\widehat{con}\overline{W}_{\mC_q}$ is defined in \eqref{eq:index_of_contraction}.

\begin{table}[]
\centering
\begin{tabular}{|l|cc|cc|cc|}
\hline
\multirow{2}{*}{}    & \multicolumn{2}{c|}{\textbf{$|\LL|=20$, $|\MM|=30$}}                                                & \multicolumn{2}{c|}{\textbf{$|\LL|=40$, $|\MM|=60$}}                                                & \multicolumn{2}{c|}{\textbf{$|\LL|=80$, $|\MM|=120$}}                                               \\ \cline{2-7} 
                     & \multicolumn{1}{l|}{\textbf{$\hat{T}_{max}$}} & \multicolumn{1}{l|}{\textbf{$\widehat{con}_{max}$}} & \multicolumn{1}{l|}{\textbf{$\hat{T}_{max}$}} & \multicolumn{1}{l|}{\textbf{$\widehat{con}_{max}$}} & \multicolumn{1}{l|}{\textbf{$\hat{T}_{max}$}} & \multicolumn{1}{l|}{\textbf{$\widehat{con}_{max}$}} \\ \hline
\textbf{Cyclic}      & \multicolumn{1}{c|}{66}                       & 19                                                  & \multicolumn{1}{c|}{109}                      & 38                                                  & \multicolumn{1}{c|}{192}                      & 78                                                  \\ \hline
\textbf{Erdős–Rényi} & \multicolumn{1}{c|}{49}                       & 3                                                   & \multicolumn{1}{c|}{64}                       & 3                                                   & \multicolumn{1}{c|}{76}                       & 3                                                   \\ \hline
\end{tabular}
\caption{This table shows $\hat{T}_{max}$ and $\widehat{con}_{max}$ for 8 different setups. We can see that a higher $\widehat{con}_{max}$ usually corresponds to a higher $\hat{T}_{max}$. This correlation is intuitive since $\widehat{con}_{max}$ is an indicator of how long it takes for information to propagate from observing agents to non-observing agents. Since the graph topology dictates $\widehat{con}_{max}$, we observe a higher $\hat{T}_{max}$ in Cyclic graphs compared to Erdős–Rényi graphs.}
\label{table:num_res_three_setups}
\end{table}
\subsection{The Effect of Malicious Agents}
In this part, we investigate the effect of malicious agents in the system to the learning protocol. We use the Erdős–Rényi graph setup from the previous part with $40$ legitimate agents. We look into two cases: First, we fix the number of malicious agents in the system to $60$ and we change the likelihood that the malicious agents make a connection with a legitimate agent. Then, we fix the probability of making a connection to $0.2$ and increase the number of malicious agents in the system. The MSE graphs are shown in Fig \ref{fig:malicious_effect}
\begin{figure}[!b]
    \centering
    \subfigure[]{\includegraphics[width=0.45\textwidth]{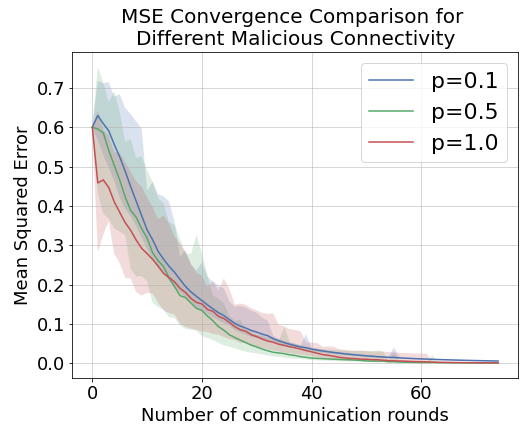}}
    \subfigure[]{\includegraphics[width=0.45\textwidth]{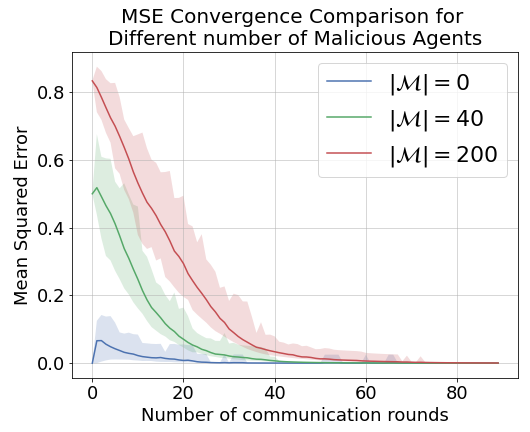}}
    \caption{(a) The effect of increasing the expected number of connections that each malicious agent has. Here, $p$ denotes the probability that the edge $(i,m)$ is present in the system, meaning that $m\in\MM$ is an in-neighbor of $i\in \LL$. 
    As malicious agents are being observed by more agents, their early effect in the network decreases since all of the directly observing agents learn their trustworthiness using their own observations, without waiting for the information to propagate from the other learning agents. (b) The effect of increasing the number of malicious agents in the system.  As we increase the number of malicious agents in the system, their effect on the legitimate agents' opinions also increase. However, the opinions of the legitimate agents still converge to the correct values eventually demonstrating agreement with our main result in Theorems \ref{thm:as_conv} and \ref{thm:convMean}.}
    \label{fig:malicious_effect}
\end{figure}

\subsection{Necessity of Assumption \ref{assumption:graph-connectivity}.2}
In our experiments, we empirically demonstrate the necessity of Assumption \ref{assumption:graph-connectivity}.2 for the learning protocol to work. We generated a simple example with two legitimate agents and two malicious agents, which can be seen in Figure \ref{fig:assumption_1_2}. In this example, the malicious agent $1$ is not an in-neighbor of either of the legitimate nodes while the malicious agent $2$ is an in-neighbor of both of the legitimate agents. With this setup, both of the legitimate agents failed to learn the identity of the agent $m_1$ as expected while learning all of the other agents successfully in ten different trials. This is because before the legitimate agents learn the trustworthiness of the malicious agent $2$, the malicious agent $2$ changes the opinions of the legitimate agents about the malicious agent $1$. After that, since none of the legitimate agents is directly observing the malicious agent $1$, their opinion does not change.

\begin{figure}[]
    \centering
    \includegraphics[scale=0.5]{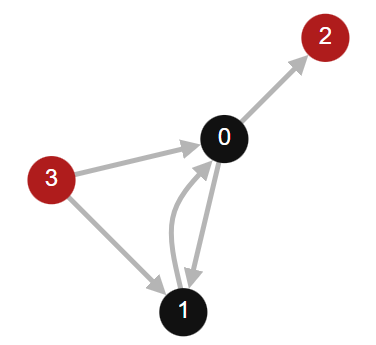}
    \caption{ An example network topology where the learning is not guaranteed and the Assumption \ref{assumption:graph-connectivity}.2 is violated. Since the agent 2 is not an in-neighbor of either of the legitimate agents, they fail to learn the trustworthiness of this agent.}
    \label{fig:assumption_1_2}
\end{figure}

\subsection{Aggregate Results}
We present numerical results over multiple trials in this section, each representing a random instantiation of the graph topology and stochastic observations of trust drawn from the distribution described at the beginning of this section. We fix the number of legitimate agents to $40$ and the number of malicious agents to $60$. Then, we run 500 trials for both cyclic graph and Erdős–Rényi setups. For each trial, we run the protocol for 1000 communication rounds and record the $\hat{T}_{max}$. The resulting statistics of $\hat{T}_{max}$ are shown in Table \ref{table:aggregate}. 

\begin{table}[]
\centering
\begin{tabular}{|l|l|l|l|l|}
\hline
                     & \multicolumn{1}{l|}{\textbf{min}} & \multicolumn{1}{l|}{\textbf{max}} & \multicolumn{1}{l|}{\textbf{mean}} & \multicolumn{1}{l|}{\textbf{std}} \\ \hline
\textbf{Cyclic}      & 97                                & 175                               & 112.8                              & 8.85                              \\ \hline
\textbf{Erdős–Rényi} & 36                                & 164                               & 58.2                               & 14.8                              \\ \hline
\end{tabular}
\caption{This table shows the summary statistics of $\hat{T}_{max}$ calculated over 500 random trials. In all trials, we observe a finite time $\hat{T}_{max}$ where no classification errors occur thereafter as predicted by the theory. Since the connections between legitimate agents in Erdős–Rényi graph is random, a higher variation in $\hat{T}_{max}$ is observed.}
\label{table:aggregate}
\end{table}

\section{Conclusion}

This paper presents a protocol for learning which agents to trust, and the accompanying analysis, for directed multiagent graphs with stochastic observations of trust.  Here, the directed nature of the graph presents an important challenge where the out-neighbors of a node cannot directly observe or receive information from it; this leads to a learning dynamic that makes accurate assessment of malicious agents in the network particularly elusive. The learning protocol developed herein specifically addresses this challenge of learning trust in directed graphs and constitutes the novelty of this paper. Since directed graphs often arise in practical multiagent systems due to heterogeneity in sensing and communication, we believe that the learning protocol and theory presented here can support many optimization, estimation, and learning tasks for general multiagent systems.

\acks{
The authors gratefully acknowledge partial support through NSF awards CNS 2147641, CNS-2147694, and AFOSR grant number FA9550-22-1-0223. The authors would like to thank Ali Taherinassaj for their helpful discussions.
}
\bibliographystyle{plain}
\bibliography{references.bib}
\end{document}